\newcommand{\myr}{ }
\newcommand{\col}{\mbox{col}}
\newtheorem{rem}{Remark}
\newtheorem{Corollary}{Corollary}
\newtheorem{prob}{Problem}
\newtheorem{thm}{Theorem}
\newtheorem{lem}{Lemma}
\newtheorem{ass}{Assumption}
\begin{document}
\title{Nonadaptive Output Regulation of Second-Order Nonlinear  Uncertain Systems}
\author{Maobin~Lu, Martin~Guay, Telema~Harry,  Shimin~Wang, Jordan~Cooper
\thanks{
This work was supported in part by the Natural Sciences and Engineering Research Council (NSERC), Canada,
in part by the National Science and Technology Major Project of China under Grant 2021ZD0112600,
in part by the National Natural Science Foundation of China under Grant
62373058,
and in part by the Beijing Natural Science Foundation under Grant L233003. 
Maobin Lu is with the School of Automation, Beijing Institute of Technology, Beijing 100081, China, and is also with the Beijing Institute of Technology Chongqing Innovation Center, Chongqing 401120, China. 
Martin~Guay, Telema~Harry,  and Jordan~Cooper are with the Department of Chemical Engineering, Queen's University, Kingston,  ON K7L 3N6, Canada (martin.guay@queensu.ca, telema.harry@queensu.ca, shimin.wang@queensu.ca) 
Shimin~Wang is with  Massachusetts Institute of Technology, Cambridge,  MA 02142, USA.\\
(Corresponding author: Shimin Wang)

}
}

\maketitle 

\begin{abstract}\myr 
This paper investigates the robust output regulation problem of second-order nonlinear uncertain systems with an unknown exosystem. 
Instead of the adaptive control approach, this paper resorts to a robust control methodology to solve the problem and thus avoid the bursting phenomenon.
In particular, this paper constructs  
generic internal models for the steady-state state and input variables of the system.
By introducing a coordinate transformation, this paper converts the robust output regulation problem into a nonadaptive stabilization problem of an augmented system composed of the second-order nonlinear uncertain system and the generic internal models. 
Then, we design the stabilization control law and construct a strict Lyapunov function that guarantees the robustness with respect to unmodeled disturbances. 
The analysis shows that the output zeroing manifold of the augmented system can be made attractive by the proposed nonadaptive control law, which  
solves the robust output regulation problem. Finally, we demonstrate the effectiveness of the proposed nonadaptive internal model approach by its application to the control of the Duffing system.

\end{abstract}
\begin{IEEEkeywords} Nonadaptive, Output regulation, Nonlinear system, Internal model principle
\end{IEEEkeywords}

\section{Introduction}
 
The output regulation problem, traditionally called the servomechanism problem, is one of the central problems in systems and control. 
It aims to design a control law for a controlled system such that asymptotic tracking for time-varying reference trajectory is achieved while rejecting time-varying disturbances \cite{isidori1990output,huang2004nonlinear}. The dynamics of the reference trajectory and the disturbances are generally governed by an autonomous system called the exosystem. 
The output regulation problem addresses a stabilization problem, disturbance rejection problem and trajectory tracking problem as special cases. Consequently, the scope of the approach is broad and  
it can be used to solve many practical problems, such as flying control of unmanned aerial vehicles, manipulation of robot arms, and motion control of mobile vehicles \cite{isidori2003robust,huang2004nonlinear,bin2020approximate,broucke2022adaptive,su2021adaptive}. 


It is known that the internal model principle plays a crucial role in output regulation. 
For linear systems, the internal model principle converts the output regulation problem into a pole assignment problem \cite{francis1976internal,francis1977linear,davison1976robust}. 
%
For nonlinear systems, the internal model design and the output regulation problem become more challenging \cite{isidori1990output,huang1990nonlinear}.
It was shown that the linear internal model for output regulation of linear systems cannot work for nonlinear output regulation \cite{huang1994robust}. 
The reason is that for nonlinear output regulation, the steady-state behaviour of nonlinear systems is described by nonlinear functions of the exogenous signal generated by the exosystem.
Moreover, unlike linear output regulation, the nonlinear output regulation problem requires a system specific approach that is not amenable to a general solution methodology. Various internal models have been developed to solve the nonlinear output regulation problem \cite{isidori1990output,huang1995asymptotic,isidori2003robust,huang2004general}. 
Nonlinear internal models were proposed in \cite{huang2004general,byrnes2003limit,byrnes2004nonlinear} to deal with the nonlinear output regulation problem. 
In particular, it was shown in 
\cite{huang2001remarks} that the nonlinear output regulation problem is solvable if the steady-state input is a trigonometric polynomial of the exogenous signals.
This condition is equivalent to the partial differential condition stated in \cite{byrnes1997structurally} when the exogenous signal is a sum of finitely many harmonics. 
A general framework was established in \cite{huang2004general}, which converts the output regulation problem into a stabilization problem of an augmented system composed of the controlled nonlinear system and the internal model. 

In situations where the exosystem contains uncertainties, some adaptive internal models have been proposed for the solution of the output regulation problem. Adopting the tools and methods from the adaptive control literature, this class of output regulation problems is also called the adaptive output regulation problem. 
In particular, pioneered in \cite{nikiforov1998adaptive}, the canonical internal model has been used to solve the output regulation problem for various linear systems \cite{marino2003output,basturk2015adaptive} and for nonlinear systems \cite{serrani2001semi,liu2009parameter,li2012nonlinear,lu2019adaptive}. 
Using these adaptive internal models, disturbances in the form of a sum of finitely many harmonics with unknown frequencies, magnitude and initial phases can be effectively tackled in the output regulation problem. 
Moreover, as shown in \cite{liu2009parameter}, the estimated parameters would converge to the real values when the canonical internal model is a minimal internal model. 
In adaptive output regulation designs,  a Lyapunov function is usually constructed with a negative semi-definite derivative along the trajectories of the closed-loop system.
Using standard adaptive control arguments, Barbalat's lemma is generally used to demonstrate the system's stability which is not established in Lyapunov sense. 
Some nonadaptive methods have been developed for the output regulation problem subject to unknown exosystems \cite{marconi2008uniform,isidori2012robust}. 
%
%
%
Some generic internal models have also been developed to solve the output regulation problem subject to an uncertain exosystem \cite{isidori2012robust,wang2023nonparametric}. 
%
%
Recently, a generic internal model was constructed in \cite{wang2023nonparametric} to solve the output regulation problem of a class of nonlinear systems in the output feedback form. 
A nonparametric learning method was proposed in \cite{wang2023nonparametric} where the steady-state input can be polynomials of the exogenous signals. Subsequently, this method was extended to address more general nonlinear cases in \cite{wang2024nonparametric}.
{\myr In addition, the results in \cite{wang2023nonparametric,wang2024nonparametric} enable the steady-state generator to be polynomial in the exogenous signal for nonlinear output regulation while relaxing restrictive assumptions on the exosystem, such as the requirement for an even dimension and the absence of zero eigenvalue in existing results. 
}

In this paper, we study the global robust output regulation problem of a class of second-order nonlinear systems subject to an unknown exosystem. 
It is known that the practical physical system can be modelled as a second-order system by Newton's law, and thus, the investigation under consideration is of practical and technical significance. 
We design a class of generic internal models, the output mapping of which is independent of the uncertain parameters of the exosystem. 
Following the general framework in \cite{huang2004general}, we introduce some coordinate transformation and convert the robust output regulation problem into a stabilization problem of the augmented system composed of the generic internal model and the second-order nonlinear system. 
Then, we design a nonadaptive stabilizing control law for the augmented system and construct the strict Lyapunov function. 
By Lyapunov analysis, we show that the augmented system is globally asymptotically stable, and thus, the robust output regulation problem is solved in spite of the unknown exosystem. Compared with the existing adaptive internal model approach to the nonlinear output regulation problem with unknown exosystems, the proposed generic internal model approach has some advantageous features. First, the inverse dynamics associated with the internal model in the augmented system have some input-to-state stability property, facilitating the nonadaptive stabilization control law design. Second, the adaptive dynamics for compensation of the unknown parameters in \cite{liu2009parameter} and \cite{lu2019adaptive} caused by the unknown exosystem are not necessary for the proposed robust stabilization control approach.
Third, a strict Lyapunov function is constructed and thus the system is stable in the sense of Lyapunov.
{\myr This results in improved robustness to unmodeled disturbances for the proposed generic internal model approach in comparison to the adaptive internal model approach.}

The rest of this paper is organized as follows. In Section \ref{section2}, we introduce some standard assumptions and lemmas. Section \ref{mainresults} is devoted to presenting the main results. This is followed by a simulation example in Section \ref{numerexam} and brief conclusions in Section~\ref{conlu}.

\textbf{Notation:} $\|\cdot\|$ is the Euclidean norm. $\emph{Id} : \mathds{R}\rightarrow \mathds{R}$ is an identity function. For $X_1,\dots,X_N\in \mathds{R}^{n}$, let $\col(X_1,\dots,X_N)=[X_{1}^{T},\dots ,X^{T}_{N}]^T$. 
A function $\alpha: \mathds{R}_{\geq 0}\rightarrow \mathds{R}_{\geq 0}$ is of class $\mathcal{K}$ if it is continuous, positive definite, and strictly increasing. $\mathcal{K}_o$ and $\mathcal{K}_{\infty}$ are the subclasses of bounded and unbounded $\mathcal{K}$ functions, respectively. For functions $f_1(\cdot)$ and $f_2(\cdot)$ with compatible dimensions, their composition $f_{1}\left(f_2(\cdot)\right)$ is denoted by $f_1\circ f_2(\cdot)$. For a matrix $X$,  $\textnormal{\mbox{Adj}}[X]$ denotes its adjugate matrix.

\section{Problem Formulation and Assumptions}\label{section2}

In this paper, we consider second-order nonlinear uncertain systems of the  following form:
\begin{subequations}\label{second-nonlinear-systems}
\begin{align}
\dot{x}_1&= x_2\\
 \dot{x}_2 &=f(x_1, x_2, v, w)+b(v,w)u \\
 y&=x_1
 \end{align}
\end{subequations}
where $x=\col(x_1,x_2)\in \mathds{R}^{2}$ is the vector of state variables, $y\in \mathds{R}$ denotes the output of the system, $u\in \mathds{R}$ is the input, $w\in \mathds{W}\subset \mathds{R}^{n_w}$ is an uncertain parameter vector with $\mathds{W}$ being an arbitrarily prescribed subset of $\mathds{R}^{n_w}$ containing the origin, and $v(t)\in \mathds{R}^{n_v}$ is an exogenous signal representing the reference input and disturbance, which is generated by the  exosystem as follows:
\begin{subequations}\label{eqn: exosystem system}\begin{align}
\dot{v}&=S(\sigma)v\\ 
 e &= y - h(v,w)
\end{align}\end{subequations}
where $S(\sigma)\in \mathds{R}^{n_v \times n_v}$ is a constant matrix, $e$ is the tracking error, and $\sigma\in \mathds{R}^{n_\sigma}$ is an uncertain parameter vector in a compact set $\mathds{S}  \subset \mathds{R}^{n_\sigma}$.
The functions $f(\cdot)$, $b(\cdot)$ and $h(\cdot)$  are globally defined and sufficiently smooth satisfying $f(0,0,0,w)=0$ and $h(0,w)=0$ for all $w\in \mathds{W}$.

The nonlinear robust output regulation problem is formulated as follows.

\begin{prob} \label{Prob: second-order-Output-regulation}
Given the nonlinear system composed of \eqref{second-nonlinear-systems} and \eqref{eqn: exosystem system}, any compact subsets $\mathds{S}\in \mathds{R}^{n_{\sigma}}$, $\mathds{W}\in \mathds{R}^{n_w}$ and $\mathds{V}\in \mathds{R}^{n_v}$ with $\mathds{W}$ and $\mathds{V}$ containing the origin, design a control law of the form
\begin{align}\label{Procontr}
u &=  \varphi(\vartheta,e,x)\notag\\
\dot{\vartheta} &=  \phi(\vartheta,e,x)
\end{align}
where $\vartheta\in \mathds{R}^{n_\vartheta}$, $\varphi(\cdot)$ and $\phi(\cdot)$ are sufficiently smooth functions vanishing at the origin, such that for $\sigma \in \mathds{S}$, $v(0)\in \mathds{V}$ and $w\in \mathds{W}$, and any initial states $x(0)$ and $\vartheta(0)$, the solution of the closed-loop system composed of \eqref{second-nonlinear-systems} and \eqref{Procontr} exists and is bounded for all $t\geq 0$,
 and $$\lim\limits_{t\rightarrow\infty}e(t)=0.$$
\end{prob}

To solve the problem,  we need to compensate for the steady state of the system
which requires the solution of the regulator equations associated with the composite system \eqref{second-nonlinear-systems} and \eqref{eqn: exosystem system}. These equations are given as follows:
\begin{subequations}\label{regeq1}
\begin{align} 
    \mathbf{x}_1(v(t), w, \sigma)  &= h(v(t), w)\\
    \mathbf{x}_2(v(t), w, \sigma) &= \frac{\partial h(v(t), w)}{\partial v(t)} S(\sigma)v(t) \\
    \mathbf{u}(v(t),w,\sigma)  &= b( v(t), w)^{-1} \Big( \frac{\partial \mathbf{x}_2(v(t), w, \sigma)}{\partial v(t)} S(\sigma)v(t)\notag \\
    & \hspace{-0.25in} - f(\mathbf{x}_1(v(t),w,\sigma), \mathbf{x}_2(v(t),w,\sigma),v(t),w)\Big)
\end{align}
\end{subequations}
where $\mathbf{x}(v(t),w,\sigma)= \textnormal{col} (\mathbf{x}_1(v(t),w,\sigma), \mathbf{x}_2(v(t),w,\sigma))$ and $\mathbf{u}(v(t),w,\sigma)$ are referred to as the steady-state state variables and the steady-state input, respectively. 

By \eqref{regeq1}, the steady-state state variables, and the steady-state input contains uncertainties and unknown disturbances. Consequently, they are not directly available.
To achieve exact compensation for output regulation, some standard assumptions are listed as follows.

\begin{ass}\label{ass0} For all $\sigma \in \mathds{S}$, all the eigenvalues of $S(\sigma)$ are simple
with zero real parts.
\end{ass}

\begin{ass}\label{ass0i} The function $\mathbf{u}(v(t), w, \sigma)$ and $\mathbf{x}_2(v(t), w, \sigma)$ are
polynomials in $v(t)$ with coefficients depending on $w$ and $\sigma$.
\end{ass}
 \begin{rem}\label{remPE}
To facilitate later analysis, let 
\begin{align*}\bm{u}_1(v(t), \sigma, w)=&\  \mathbf{x}_2(v(t), w, \sigma),\\
    \bm{u}_2(v(t),\sigma,w)=&\  \mathbf{u}(v(t), w, \sigma).\end{align*}
From \cite{huang2001remarks,liu2009parameter}, under Assumption \ref{ass0i}, there exist integers $n^*_1>0$ and $n^*_2>0$ such that 
 $\bm{u}_i(v(t),\sigma,w)$ can be expressed by
\begin{align}
\bm{u}_i(v(t),\sigma,w)=&\,\sum_{j=1}^{n_i^*}C_{i,j}(v(0), w,\sigma)e^{\imath \hat{\omega}_{i,j}t}
\end{align}
for some functions $C_{i,j}(v(0), w,\sigma)$, where $\imath$ is the imaginary unit and $\hat{\omega}_{i,j}$ are distinct real numbers for {\myr $1\leq j \leq n^*_i$} and each $i=1,2$.
\end{rem}
\begin{ass}\label{ass5-explicit}  
 For any $v(0)\in \mathds{V}$, $w\in \mathds{W}$ and $\sigma\in \mathds{S}$, $C_{i,j}(v(0), w,\sigma)\neq 0$,  {\myr $1\leq j \leq n^*_i$} and $i=1,2$ .
\end{ass}

\section{Main results}\label{mainresults}

\subsection{Internal model design}

As shown in \cite{huang2004nonlinear}, under Assumptions \ref{ass0} and \ref{ass0i}, there exist positive integers $n_i$, such that $\bm{u}_i(v,\sigma,w)$ satisfy for all $\col(v,w,\sigma)\in \mathds{V}\times\mathds{W}\times \mathds{S}$,
{\myr \begin{align} \label{aode-explicit}
\frac{d^{n_i}\bm{u}_i(v,\sigma,w)}{dt^{n_i}}&+a_{i,1}(\sigma)\bm{u}_i(v,\sigma,w)+\dots\notag\\
&+a_{i,n_i}(\sigma)\frac{d^{n_i-1}\bm{u}_i(v,\sigma,w)}{dt^{n_i-1}}=0,\;\;\;\;i=1,2
\end{align}}
where $a_{i,1}(\sigma),\dots,a_{i,n_i}(\sigma)$ all belong to $\mathds{R}$.
The differential equation \eqref{aode-explicit} is characterized by the following polynomial $$\varsigma^{n_i}+a_{i,1}(\sigma)+a_{i,2}(\sigma)\varsigma+\dots+a_{i,n_i}(\sigma)\varsigma^{n_i-1},\;\;\;\;i=1,2,$$ which admits distinct roots with zeros real parts for all $\sigma\in \mathds{S}$. 

For $i=1,2$, let $a_i(\sigma)=\col(a_{i,1}(\sigma), \dots, a_{i,n_i}(\sigma))$ and $$\bm{\xi}_i(v,\sigma,w)= \col\!\left(\bm{u}_i(v,\sigma,w),\dots,\frac{d^{n_i-1}\bm{u}_i(v,\sigma,w)}{dt^{n_i-1}}\right).$$ For ease of presentation, we adopt the following slight abuse of notation, $$\bm{\xi}_i:=\bm{\xi}_i(v,\sigma,w),\;\;\;\;i=1,2.$$ 
Next, we define the following matrix-valued function,
\begin{align*}
  \Phi_i(a_i(\sigma)) =&\,\left[
                      \begin{array}{c|c}
                        0_{(n_i-1)\times 1} & I_{n_i-1} \\
                        \hline
                        -a_{i,1}(\sigma) &-a_{i,2}(\sigma),\dots,-a_{i,n_i}(\sigma) \\
                      \end{array}
                    \right], 
                    \end{align*}
and, row vector,                     
                    \begin{align*}
  \Gamma_i =&\,\left[
                \begin{array}{cccc}
                  1 & 0 & \dots &0 \\
                \end{array}
              \right]_{1\times n_i},\;\;\;\;i=1,2.
\end{align*}
As a consequence, it follows that $ \bm{\xi}_i\left(v,\sigma,w\right)$, $\Phi_i(a_i(\sigma))$ and $\Gamma_i$ satisfy the equations:
\begin{subequations}\label{stagerator}\begin{align}
\dot{\bm{\xi}}_i &= \Phi_i (a_i(\sigma))  \bm{\xi}_i,&\\
   \bm{u}_i &=  \Gamma_i \bm{\xi}_i, &i=1,2.
\end{align}\end{subequations}
System \eqref{stagerator} is called a steady-state generator with output $u_i$ as it can be used to produce the steady-state signal $\textbf{u}_i(v,\sigma,w)$\cite{huang2004nonlinear}. We define the controllable matrix pair $(M_i, N_i)$ as follows:
\begin{subequations}\label{MNINter}\begin{align}
M_i=&\,\left[
                      \begin{array}{c|c}
                        0_{(2n_i-1)\times 1} & I_{2n_i-1} \\
                        \hline
                        -m_{i,1} &-m_{i,2},\dots,-m_{i,2n_i} \\
                      \end{array}
                    \right],\\
N_i=&\,\left[
                \begin{array}{cccc}
                  0 & 0 & \dots &1 \\
                \end{array}
              \right]_{1\times 2n_i}^T,\;\;\;\;i=1,2,                  
\end{align}
\end{subequations}
where $m_{i,1}$, $m_{i,2}$ , $\dots$, $m_{i,2n_i}$ are chosen such that $M_i$ is Hurwitz. We also consider the following matrix-valued function
$$\Xi_i(a_i) =:\Phi_i(a_i)^{2n_i }+\sum\nolimits_{j=1}^{2n_i}m_{i,j}\Phi_i(a_i)^{j-1} \in \mathds{R}^{n_i \times n_i }.$$ 
For $i=1,2$, by \cite{wang2023nonparametric,wang2024nonparametric}, it follows that, under Assumptions \ref{ass0}-\ref{ass5-explicit}, the matrix $\Xi_i(a_i) $ is non-singular and 
\begin{align}\label{XIQA-explicit}
\Xi_i (a_i)^{-1}=\textnormal{\col}\left(Q_{i,1}(a_i),\dots,Q_{i,n_i}(a_i)\right)\in \mathds{R}^{n_i \times n_i }
\end{align}
where  $$ Q_{i,j}(a_i)=\Gamma_i \Xi_i(a_i)^{-1}\Phi_i(a_i )^{j-1} \in \mathds{R}^{1\times n_i},\;\;j=1,\dots, n_i.$$
For $i=1,2$, we define the following Hankel real matrix \cite{afri2016state}: 
\begin{align}
\Theta_i (\theta_i)&=:\left[\begin{matrix}\theta_{i,1} &\theta_{i,2}&\dots&\theta_{i,n_i}\\
\theta_{i,2}&\theta_{i,3}&\dots&\theta_{i,n_i+1}\\
\vdots&\vdots&\ddots&\vdots\\
\theta_{i,n_i} &\theta_{i,n_i +1}&\dots&\theta_{i,2n_i -1}
\end{matrix}\right]\in \mathds{R}^{n_i \times n_i }\nonumber
\end{align}
where $$\theta_i = \textnormal{\col}(\theta_{i,1}, \theta_{i,2}, \dots, \theta_{i,2n_i})= Q_i \bm{\xi}_i(v(t), \sigma, w), $$  with 
\begin{align}\label{Qdefini}
Q_i=:\textnormal{\col}\left(Q_{i,1},\dots,Q_{i,2n_i}\right)\in \mathds{R}^{2n_i\times n_i}
\end{align}
and $$Q_{i,j}(a_i)=\Gamma_i \Xi_i(a_i)^{-1}\Phi_i(a_i)^{j-1} \in \mathds{R}^{1\times n_i},\;\;1\leq j\leq 2n_i.$$
The matrices $Q_i$, $M_i$, $N_i$,  $\Phi_i(a_i)$ and $\Gamma_i$ satisfy the following generalized Sylvester matrix equation:
 \begin{align}
M_i Q_i =&\,Q_i \Phi_i(a_i(\sigma))-N_i\Gamma_i, \;i=1,2.\label{MNGAMMAPhi}
\end{align}   
The existence of a matrix solution has been established in \cite{wang2023nonparametric,wang2024nonparametric,zhou2006new}. 
As shown in \cite{marconi2007output,marconi2008uniform,KE2003},  there exist nonlinear mappings 
\begin{align*}
\bm{\eta}^{\star}_i(v(t),\sigma,w)=&\,\int_{-\infty}^{t} e^{M_i(t-\tau)}N_i\bm{u}_i(v(\tau), \sigma, w) d\tau\\
\bm{u}_i(v(t), \sigma, w)=&\,\chi_i(\bm{\eta}^{\star}_i(v(t),\sigma,w)), ~~ \bm{\eta}^{\star}_i\in\mathds{R}^{2n_{i}},\;\;i=1,2,
\end{align*}
that satisfy the following differential equations 
\begin{align}\label{IM-01}
\frac{d \bm{\eta}^{\star}_i(v(t),\sigma,w)}{dt}  &= M_i\bm{\eta}_i^{\star}(v(t),\sigma,w) + N_i\bm{u}_i(v(t), \sigma, w) \notag\\
\bm{u}_i(v(t), \sigma, w) &= \chi_i(\bm{\eta}^{\star}_i(v(t),\sigma,w))
\end{align}
where 
$\chi_i(\cdot)$ is a sufficiently smooth function \cite{marconi2008uniform, KE2003,wang2023nonparametric}. {\myr As a result, equations \eqref{IM-01} construct generic internal models for both the steady-state state and input variables of the regulator equation \eqref{regeq1}. }
By using the matrix equation \eqref{MNGAMMAPhi}, it can be shown that system \eqref{IM-01} is also a steady-state generator of $\bm{u}_i(v(t), \sigma, w)$, $i=1,2$ with coordinate transformation $$\bm{\eta}^{\star}_i(v(t), \sigma, w)=Q_i \bm{\xi}_i(v(t), \sigma, w),\;\;\;\;i=1,2.$$
 Under Assumptions \ref{ass0}, \ref{ass0i} and \ref{ass5-explicit}, by Lemma 3 in \cite{wang2023nonparametric}, we have $$\bm{\eta}^{\star}_i(v(t), \sigma, w)=\theta_i,\;\;\;\;i=1,2.$$ 
Then, system \eqref{IM-01} yields the following dynamic compensator
\begin{subequations}\label{explicit-mas}
\begin{align}
\dot{\eta}_1&=M_1\eta_1+N_1x_2\label{explicit-mas1}\\
\dot{\eta}_2&=M_2\eta_2+N_2u.\label{explicit-mas2}
\end{align}
\end{subequations}
which is the internal model associated with systems \eqref{second-nonlinear-systems} and \eqref{eqn: exosystem system}.
\subsection{Error dynamics}
We now perform the following coordinate and input transformation on the composite system \eqref{second-nonlinear-systems} and \eqref{eqn: exosystem system}:
\begin{align*}
\bar{x}_{1}=&\,x_1-\mathbf{x}_1,\;\bar{\eta}_{1}=\eta_1- \theta_1-N_1\bar{x}_{1},\;
\bar{u}=u- \chi_2(\eta_2)\\
\bar{x}_{2}=&\,x_2- \mathbf{x}_2,\;\bar{\eta}_{2}=\eta_2- \theta_2-b(v, w)^{-1}N_2\bar{x}_{2}
\end{align*}
which yields an error system in the following form:
\begin{subequations}\label{Main-sys1}\begin{align}
\dot{\bar{\eta}}_{1}
=&\, M_1\bar{\eta}_1+M_1N_1\bar{x}_{1}\\
\dot{\bar{x}}_{1}
=&\,\bar{x}_2\\
 \dot{\bar{\eta}}_{2}
 =&\, M_2 \bar{\eta}_2+\bar{p}(\bar{x}_1, \bar{x}_2, \mu)\\
 \dot{\bar{x}}_{2}
=&\, \bar{f}(\bar{x}_1, \bar{x}_2, v, w)+  b(v,w)\tilde{\chi}_2(\bar{\eta}_2,\bar{x}_2,\mu) + b(v,w)\bar{u}
\end{align}\end{subequations}
where $ \mu=\col(\sigma,v,w)$,
\begin{align*}
\bar{f}(\bar{x}_1, \bar{x}_2, \mu)=&\,f(\bar{x}_1+\mathbf{x}_1, \bar{x}_2+\mathbf{x}_2, \mu)- f(\mathbf{x}_1, \mathbf{x}_2,\mu)\\
\bar{p}(\bar{x}_1, \bar{x}_2, \mu)=&\,-b(v, w)^{-1}N_2\bar{f}(\bar{x}_1, \bar{x}_2, \mu)\\
&+b(v, w)^{-1}M_2N_2\bar{x}_{2}-\frac{d b(v, w)^{-1}}{dt}N_2\bar{x}_{2}\\
\tilde{\chi}_2(\bar{\eta}_2,\bar{x}_2,\mu)=&\,\chi_2\left(
\bar{\eta}_{2}+ \bm{\eta}^{\star}_2+b(v, w)^{-1}N_2\bar{x}_{2}\right)-\chi_2(\bm{\eta}^{\star}_2). 
\end{align*}
%
%
%
It is noted that the signal $\bar{x}_2$ is not measurable since the signal $\mathbf{x}_2(v,w,\sigma)$ is not accessible, and that only $\chi_i(\eta_i)$ is available. Therefore, we define 
\begin{align}
\tilde{x}_2=&\, x_2-\chi_1(\eta_1),  
&\zeta= \tilde{x}_2+ \rho(\bar{x}_1)\bar{x}_1,
\end{align}
where $\rho(\cdot)$ is some smooth positive function to be specified later.
Then, we have $\tilde{x}_2=\bar{x}_2-\tilde{\chi}_1(\bar{\eta}_1, \bar{x}_{1})$ with $$\tilde{\chi}_1(\bar{\eta}_1, \bar{x}_{1})=\chi_1(\bar{\eta}_1+\bm{\eta}^{\star}_1+N_1\bar{x}_{1})-\chi_1(\bm{\eta}^{\star}_1),$$ which yields the following augmented system
\begin{subequations}\label{error-system}\begin{align}
\dot{\bar{\eta}}_{1}=&\,M_1\bar{\eta}_1+M_1N_1\bar{x}_{1}\label{error-system-a}\\
\dot{\bar{x}}_{1}=&\,\zeta-\rho(\bar{x}_1)\bar{x}_1+\tilde{\chi}_1(\bar{\eta}_1, \bar{x}_{1})\label{error-system-b}\\
\dot{\bar{\eta}}_{2}=&\,M_2 \bar{\eta}_2+\tilde{p}(\bar{x}_1, \bar{\eta}_1, \zeta, \mu)\label{error-system-c}\\
\dot{\zeta}=&\, \tilde{f}(\bar{x}_1, \bar{\eta}_1, \zeta, \mu)+b(v,w)\tilde{\chi}_2(\bar{x}_{1}, \bar{\eta}_1, \bar{\eta}_{2}, \zeta, \mu)+ b(v,w)\bar{u}\label{error-system-d}
\end{align}\end{subequations}
where 
\begin{align*}
\tilde{p}(\bar{x}_1, \bar{\eta}_1, \zeta, \mu)=&\, \bar{p}(\bar{x}_1, \zeta-\rho(\bar{x}_1)\bar{x}_1+\tilde{\chi}_1(\bar{\eta}_1, \bar{x}_{1}), \mu)\\
\tilde{f}(\bar{x}_1,  \bar{\eta}_1, \zeta, \mu)=&\,\bar{f}(\bar{x}_1, \zeta-\rho(\bar{x}_1)\bar{x}_1+\tilde{\chi}_1(\bar{\eta}_1, \bar{x}_{1}), \mu)\\
&-\frac{\partial \tilde{\chi}_1}{\partial \bar{\eta}_1} \left[M_1\bar{\eta}_1+M_1N_1\bar{x}_{1}\right]\\
&-\frac{\partial \tilde{\chi}_1}{\partial \bar{x}_1} \left[\zeta-\rho(\bar{x}_1)\bar{x}_1+\tilde{\chi}_1(\bar{\eta}_1, \bar{x}_{1})\right]\\
&+\frac{\partial (\rho(\bar{x}_1)\bar{x}_1)}{\partial \bar{x}_1}\big[\zeta-\rho(\bar{x}_1)\bar{x}_1+\tilde{\chi}_1(\bar{\eta}_1, \bar{x}_{1})\big]\\
\tilde{\chi}_2(\bar{x}_{1}, \bar{\eta}_1, \bar{\eta}_{2}, \zeta, \mu)=&\, \chi_2(\bar{\eta}_{2}+ \bm{\eta}^{\star}_2 +b(v, w)^{-1} N_2 (\zeta-\rho(\bar{x}_1)\bar{x}_1\\& +\tilde{\chi}_1(\bar{\eta}_1, \bar{x}_{1})))-\chi_2(\bm{\eta}^{\star}_2).
\end{align*}
It can be verified that all functions in  \eqref{error-system} are sufficiently smooth and vanish at the origin. 

\subsection{Nonadaptive robust output regulation}

For all $\mu \in \mathds{V}\times\mathds{W}\times\mathds{S}$, the origin is the equilibrium point of the augmented system \eqref{error-system} and the output $e$ is identically zero at the origin. By the general framework established in \cite{huang2004nonlinear}, the role of the internal model is to convert the output regulation problem of the system \eqref{second-nonlinear-systems} into the stabilization problem of the augmented system \eqref{error-system}, which is summarized as follows.

\begin{lem}\label{lemproconv1}
For all $\mu \in \mathds{V}\times\mathds{W}\times\mathds{S}$, if there exists a feedback control law of the form \begin{equation}\label{virtcontr}
\bar{u}=\psi(\zeta)
\end{equation}
that solves the robust stabilization problem of system \eqref{error-system}, then the robust output regulation problem of system \eqref{second-nonlinear-systems} can be solved by the control law composed of \eqref{explicit-mas} and \eqref{virtcontr}. 
\end{lem}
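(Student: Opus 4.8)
The plan is to establish the equivalence between the robust output regulation problem for the original plant and the robust stabilization problem for the augmented error system, following the standard internal model argument of \cite{huang2004nonlinear}. First I would observe that, by construction of the coordinate and input transformations leading to \eqref{error-system}, the stacked state $\col(\bar{\eta}_1,\bar{x}_1,\bar{\eta}_2,\zeta)$ is a smooth invertible function of $\col(x,\eta_1,\eta_2)$ along trajectories, for each fixed $\mu=\col(\sigma,v,w)$; the key point is that the transformations are well defined because $b(v,w)$ is bounded away from zero on the compact parameter set and $\chi_i$, $\theta_i$, $\bm{\eta}^\star_i$ are smooth. Hence boundedness of the closed-loop state of \eqref{error-system} together with boundedness of $v(t)$ (which holds by Assumption~\ref{ass0}, since $S(\sigma)$ has simple eigenvalues with zero real parts) implies boundedness of the original closed-loop state, and conversely.

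Next I would assemble the actual controller. Given the stabilizer $\bar{u}=\psi(\zeta)$ for \eqref{error-system}, I unwind the input transformation $\bar{u}=u-\chi_2(\eta_2)$ to get $u=\chi_2(\eta_2)+\psi(\zeta)$, and recall $\zeta=\tilde{x}_2+\rho(\bar{x}_1)\bar{x}_1$ with $\tilde{x}_2=x_2-\chi_1(\eta_1)$ and $\bar{x}_1=x_1-\mathbf{x}_1=x_1-h(v,w)=e$ (using the regulator equation $\mathbf{x}_1=h(v,w)$). Thus $\zeta=x_2-\chi_1(\eta_1)+\rho(e)e$ is expressed entirely in terms of the measured quantities $x$, $e$, and the internal model state $\eta=\col(\eta_1,\eta_2)$; together with the internal model dynamics \eqref{explicit-mas}, driven by $x_2$ and $u$, this yields a controller of the admissible form \eqref{Procontr} with $\vartheta=\eta$, and all the composing functions are smooth and vanish at the origin. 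I would verify the vanishing-at-origin property using $f(0,0,0,w)=0$, $h(0,w)=0$, and smoothness of $\chi_i$ with $\chi_i(0)=0$.

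Then I would close the argument on convergence of the error. Since the stabilizer renders the origin of \eqref{error-system} an asymptotically stable equilibrium with all trajectories bounded, in particular $\bar{x}_1(t)\to 0$. Because $e=y-h(v,w)=x_1-\mathbf{x}_1=\bar{x}_1$, it follows immediately that $\lim_{t\to\infty}e(t)=0$, which is exactly the requirement of Problem~\ref{Prob: second-order-Output-regulation}. Boundedness of the full closed-loop state $\col(x,\eta)$ follows from the transformation being a bijection with smooth inverse and from $v(t)$ remaining bounded, as noted above. I would also note that the implication is uniform over $\sigma\in\mathds{S}$, $v(0)\in\mathds{V}$, $w\in\mathds{W}$ because the transformation and its inverse depend continuously on $\mu$ over the compact set $\mathds{V}\times\mathds{W}\times\mathds{S}$.

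The main obstacle I anticipate is the careful bookkeeping in verifying that the coordinate change is a genuine global diffeomorphism (for fixed $\mu$) rather than merely a local one — in particular that $\bar{\eta}_i=\eta_i-\theta_i-(\cdot)\bar{x}_i$ can be inverted globally, which relies on $\theta_i=\bm{\eta}^\star_i=Q_i\bm{\xi}_i$ being a well-defined bounded function of the bounded exogenous state and on the linearity of these shifts in $\eta_i$. The algebraic manipulations producing \eqref{error-system} from \eqref{second-nonlinear-systems}–\eqref{eqn: exosystem system} are routine but must be consistent with the regulator equations \eqref{regeq1} and the Sylvester identity \eqref{MNGAMMAPhi}; I would present these as already established by the construction in the preceding subsections and focus the proof on the equivalence and the extraction of the controller \eqref{Procontr}.
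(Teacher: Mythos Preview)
The paper does not actually prove this lemma: it merely states it as a summary of the general framework in \cite{huang2004nonlinear}, with no accompanying proof environment. Your proposal is a correct and appropriately detailed reconstruction of the standard argument that the paper leaves implicit, so there is nothing to compare against beyond noting that your write-up supplies what the paper omits.
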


Then, we can introduce the following properties of the system \eqref{error-system} under Assumptions \ref{ass0}, \ref{ass0i} and \ref{ass5-explicit}.
\begin{lem}\label{lemmabodev} %
Consider system \eqref{error-system} subject to Assumptions \ref{ass0}, \ref{ass0i} and \ref{ass5-explicit}.  Then, for any smooth functions $\rho_1(\cdot)\geq 1$, $\gamma_{\bar{\eta}_1}(\cdot)\geq 1$ and $\gamma_{\bar{\eta}_2} (\cdot)\geq 1$, there exist a smooth positive function $\rho(\bar{x}_1)$ 
and a smooth input-to-state Lyapunov function $V_1:=V_1(\bar{\eta}_1, \bar{x}_1, \bar{\eta}_2)$ satisfying $$\underline{\alpha}_1(\|(\bar{\eta}_1, \bar{x}_1, \bar{\eta}_2)\|) \leq V_1(\bar{\eta}_1, \bar{x}_1, \bar{\eta}_2)\leq \bar{\alpha}_1(\|(\bar{\eta}_1, \bar{x}_1, \bar{\eta}_2)\|)$$ such that, for some comparison functions $\underline{\alpha}_1(\cdot)\in K_{\infty}$, $\bar{\alpha}_1(\cdot)\in K_{\infty}$ and positive smooth function $\hat{\beta}_{3p}(\cdot)$, 
\begin{align}\label{V1-derivative}
\dot{V}_1(\bar{\eta}_1, \bar{x}_1, \bar{\eta}_2) \leq & -\rho_1(\bar{x}_1)\bar{x}_1^2-\gamma_{\bar{\eta}_1}(\bar{\eta}_1 )\|\bar{\eta}_1\|^2\nonumber\\
&-\gamma_{\bar{\eta}_2} (\bar{\eta}_2 )\|\bar{\eta}_2\|^2+[\hat{\beta}_{3p}(\zeta)+2]\zeta^2
\end{align}
for all $\mu\in \mathds{V}\times\mathds{W}\times\mathds{S}$.
\end{lem}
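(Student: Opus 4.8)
The plan is to construct $V_1$ recursively, handling the three subsystems \eqref{error-system-a}, \eqref{error-system-b}, \eqref{error-system-c} in a backstepping-like fashion, and to treat $\zeta$ throughout as an exogenous input whose influence is collected into the final $[\hat{\beta}_{3p}(\zeta)+2]\zeta^2$ term. First I would exploit that $M_1$ and $M_2$ are Hurwitz: by the converse Lyapunov theorem there exist positive definite matrices $P_1=P_1^T>0$ and $P_2=P_2^T>0$ with $M_i^TP_i+P_iM_i=-I$. Using $\tilde\chi_1$ and $\tilde\chi_2$ vanishing at the origin and smoothness, each composite nonlinearity ($\tilde\chi_1(\bar\eta_1,\bar x_1)$, $\tilde p$, $\tilde f$-related cross terms) admits a bound of the form $\|\cdot\|\le \varsigma(\cdot)(\|\bar\eta_1\|+|\bar x_1|+|\zeta|)$ for suitable smooth nonnegative $\varsigma$ on the compact parameter set $\mathds{V}\times\mathds{W}\times\mathds{S}$ — this is the standard ``changing-supply-rate''/ISS-gain estimate. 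The $\bar\eta_1$-subsystem \eqref{error-system-a} is ISS with respect to $\bar x_1$, so $V_{\bar\eta_1}=\bar\eta_1^TP_1\bar\eta_1$ (possibly reshaped via a smooth scalar function of $\|\bar\eta_1\|$ to upgrade the gain to the prescribed $\gamma_{\bar\eta_1}$) gives $\dot V_{\bar\eta_1}\le -\gamma_{\bar\eta_1}(\bar\eta_1)\|\bar\eta_1\|^2+\kappa_1(\bar x_1)\bar x_1^2$ for some smooth $\kappa_1$. The same construction applied to \eqref{error-system-c} yields $V_{\bar\eta_2}=\bar\eta_2^TP_2\bar\eta_2$ with $\dot V_{\bar\eta_2}\le -\gamma_{\bar\eta_2}(\bar\eta_2)\|\bar\eta_2\|^2+\kappa_2(\bar x_1)\bar x_1^2+\kappa_3(\zeta)\zeta^2$, since $\tilde p$ depends on $\bar x_1,\bar\eta_1,\zeta$ and the $\bar\eta_1$ contribution is absorbed by Young's inequality into the $\bar\eta_1$ term already available.

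Next I would address the $\bar x_1$-equation \eqref{error-system-b}, which is where the design freedom $\rho(\bar x_1)$ enters. Taking $\tfrac12\bar x_1^2$ as a partial Lyapunov function, its derivative is $\bar x_1\zeta-\rho(\bar x_1)\bar x_1^2+\bar x_1\tilde\chi_1(\bar\eta_1,\bar x_1)$. Using the gain bound on $\tilde\chi_1$ and Young's inequality, $\bar x_1\tilde\chi_1\le \tfrac14\|\bar\eta_1\|^2+q(\bar x_1)\bar x_1^2$ for a smooth $q$, and $\bar x_1\zeta\le \tfrac14\zeta^2+\bar x_1^2$. Then I would set $V_1=V_{\bar\eta_1}+\lambda\,\tfrac12\bar x_1^2+V_{\bar\eta_2}$ for a constant $\lambda$ chosen large enough that the good term $-\lambda\rho(\bar x_1)\bar x_1^2$ dominates all the $\bar x_1^2$ contributions $\kappa_1,\kappa_2,q$ coming from the two $\bar\eta$-subsystems and the cross terms, while the $\bar\eta_1$ cross terms generated by $\tfrac\lambda4\|\bar\eta_1\|^2$ are dominated because $\gamma_{\bar\eta_1}\ge 1$ can be pre-inflated; concretely one chooses $\rho(\bar x_1)$ pointwise so that $\lambda\rho(\bar x_1)\ge \rho_1(\bar x_1)+\kappa_1(\bar x_1)+\kappa_2(\bar x_1)+q(\bar x_1)+1$, which is always possible with a smooth positive $\rho$ since the right-hand side is a fixed smooth positive function of $\bar x_1$. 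Collecting everything leaves exactly $-\rho_1(\bar x_1)\bar x_1^2-\gamma_{\bar\eta_1}(\bar\eta_1)\|\bar\eta_1\|^2-\gamma_{\bar\eta_2}(\bar\eta_2)\|\bar\eta_2\|^2$ plus a residual in $\zeta$ of the form $[\hat\beta_{3p}(\zeta)+2]\zeta^2$, after renaming $\kappa_3$ and the $\tfrac14$ and $\bar x_1\zeta$ leftovers into $\hat\beta_{3p}$. The $\mathcal K_\infty$ bounds $\underline\alpha_1,\bar\alpha_1$ on $V_1$ follow since $V_1$ is a sum of a quadratic form in $(\bar\eta_1,\bar\eta_2)$ and $\tfrac\lambda2\bar x_1^2$, hence positive definite, radially unbounded and proper.

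The main obstacle is getting the $\zeta$-dependence isolated cleanly: $\zeta$ appears nonlinearly inside $\tilde\chi_1$, $\tilde p$, $\tilde f$ and $\tilde\chi_2$ composed with $b(v,w)^{-1}N_2$, so one must verify that every occurrence can be bounded by a product of a smooth function of $\zeta$ alone times $\zeta^2$, uniformly over the compact parameter set, without spawning uncontrollable cross terms with $\bar x_1$ or $\bar\eta_i$ that cannot be re-absorbed by enlarging $\rho$ or $\gamma_{\bar\eta_i}$. This uniformity over $\mathds{V}\times\mathds{W}\times\mathds{S}$ relies on compactness and on the assumed boundedness of $\bm\eta^\star_i=\theta_i$ (Remark \ref{remPE} and Lemma 3 of \cite{wang2023nonparametric}), so that the arguments of the smooth maps $\chi_i$ stay in a compact set plus the error coordinates; I would state this as a preliminary bounding lemma. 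A secondary technical point is that the prescribed lower bounds $\rho_1\ge1$, $\gamma_{\bar\eta_i}\ge1$ must be achievable simultaneously with the margin needed for domination — this is handled by first inflating $\gamma_{\bar\eta_1},\gamma_{\bar\eta_2}$ inside the $\bar\eta$-Lyapunov redesign (an extra smooth scalar multiplier in $V_{\bar\eta_i}$), then choosing $\rho$ last, so no circularity arises. The rest is the routine bookkeeping of Young's inequality and smooth majorization that I would not carry out in detail.
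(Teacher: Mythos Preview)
Your proposal is correct and follows essentially the same route as the paper: quadratic Lyapunov functions $\bar\eta_i^TP_i\bar\eta_i$ for the Hurwitz $\bar\eta_i$-dynamics, reshaped via the changing-supply-rate technique to achieve the prescribed $\gamma_{\bar\eta_i}$, combined with $\tfrac12\bar x_1^2$, with $\rho(\bar x_1)$ chosen last to dominate all $\bar x_1^2$ cross terms. Two minor remarks: the scaling constant $\lambda$ you introduce is redundant since $\rho$ is already a free design function, and $\tilde\chi_1$ does not actually depend on $\zeta$ (nor do $\tilde f,\tilde\chi_2$ enter this lemma at all), so the only genuine $\zeta$-contribution in \eqref{V1-derivative} comes from $\tilde p$ via $\bar\eta_2$ and from the cross term $\bar x_1\zeta$.
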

\begin{proof}
Since $M_1$ and $M_2$ are Hurwitz, there exist positive definite matrices $P_1$ and $P_2$ such that $P_iM_i+M_iP_i^T\leq -4 I_i$, for $i=1,2$. In terms of equations \eqref{error-system-b} and \eqref{error-system-c}, 
it can be verified that the function $\bar{\chi}_1(\bar{\eta}_1, \bar{x}_{1})$ and $\tilde{p}(\bar{x}_1, \bar{\eta}_1, \zeta, \mu)$ are continuously differentiable and vanish at $\col(\bar{x}_{1}, \bar{\eta}_1, \zeta)=\col(0, 0, 0)$. 
Hence, by \cite[Lemma 11.1]{chen2015stabilization}, there exist positive smooth functions $\beta_{1\chi}(\cdot)$, $\beta_{2\chi}(\cdot)$, $\beta_{1p}(\cdot)$, and $\beta_{2p}(\cdot)$ such that for $\mu \in \mathds{V}\times\mathds{W}\times\mathds{S}$,
\begin{align*}
\|\tilde{\chi}_1(\bar{\eta}_1, \bar{x}_{1})\|^2\leq &\ \beta_{1\chi}(\bar{\eta}_1)\|\bar{\eta}_1\|^2+ \bar{x}_{1}^2\beta_{2\chi}(\bar{x}_{1}),\\
\|\tilde{p}(\bar{x}_1, \bar{\eta}_1, \zeta, \mu)\|^2\leq &\ \beta_{1 p}(\bar{\eta}_1)\|\bar{\eta}_1\|^2+ \bar{x}_{1}^2\beta_{2p}(\bar{x}_{1})+ \zeta^2\beta_{3p}(\zeta).
\end{align*}
Let $V_{\bar{\eta}_1}=\frac{1}{2} \bar{\eta}_1^TP_1\bar{\eta}_1$ and $V_{\bar{\eta}_2}=\frac{1}{2} \bar{\eta}_2^TP_2\bar{\eta}_2$. Then, the time derivative of $V_{\bar{\eta}_1}$ and $V_{\bar{\eta}_2}$ along the trajectories of systems \eqref{error-system-b} and \eqref{error-system-c} are as follows{\myr 
\begin{align*}
\dot{V}_{\bar{\eta}_1}\leq&\ (1/2)\bar{\eta}_1^T\left(P_1M_1+M_1P_1^T\right)\bar{\eta}_1+\bar{\eta}_1^TP_1M_1N_1\bar{x}_{1}\\
\leq&- \|\bar{\eta}_1\|^2+\frac{1}{4}\|P_1M_1N_1\|^2\|\bar{x}_{1}\|^2,\\
\dot{V}_{\eta_2}
\leq&\ (1/2)\bar{\eta}_2^T\left(P_2M_2+M_2P_2^T\right)\bar{\eta}_2+\bar{\eta}_2^T\bar{p}(\bar{x}_1, \bar{x}_2, \mu)\\
\leq& - \|\bar{\eta}_2\|^2+\|P_2\|^2\big[\beta_{1 p}(\bar{\eta}_1)\|\bar{\eta}_1\|^2\\
&+ \bar{x}_{1}^2\beta_{2p}(\bar{x}_{1})+ \zeta^2\beta_{3p}(\zeta)\big].
\end{align*}}
By using the changing supply rate technique \cite{sontag1995changing}, given any smooth functions $\bar{\gamma}_{\bar{\eta}_1} (\bar{\eta}_1 )>0$ and $\gamma_{\bar{\eta}_2} (\bar{\eta}_2 )>0$, there exists a continuous function $U_{\bar{\eta}_i}:=U_{\bar{\eta}_i}(\bar{\eta}_i)$ satisfying 
$$\underline{\alpha}_{\bar{\eta}_i}\big(\big\|\bar{\eta}_i \big\|\big)\leq U_{\bar{\eta}_i}(\bar{\eta}_i)\leq\overline{\alpha}_{\bar{\eta}_i}\big(\big\|\bar{\eta}_i \big\|\big),\;i=1,2$$ for some class $\mathcal{K}_{\infty}$ functions $\underline{\alpha}_{\bar{\eta}_i}(\cdot)$ and $\overline{\alpha}_{\bar{\eta}_i}(\cdot)$, such that for all $\mu \in \mathds{V}\times\mathds{W}\times\mathds{S}$
\begin{align*}\dot{U}_{\bar{\eta}_1}\leq &-\bar{\gamma}_{\bar{\eta}_1} (\bar{\eta}_1 )\|\bar{\eta}_1\|^2+\hat{\beta}_{u1}(\bar{x}_1)\|\bar{x}_1\|^2,\\
\dot{U}_{\bar{\eta}_2}\leq&-\gamma_{\bar{\eta}_2} (\bar{\eta}_2 )\|\bar{\eta}_2\|^2+\hat{\beta}_{1 p}(\bar{\eta}_1)\|\bar{\eta}_1\|^2\\
&+ \hat{\beta}_{2p}(\bar{x}_{1})\bar{x}_{1}^2+ \hat{\beta}_{3p}(\zeta)\zeta^2,
\end{align*}
for some positive smooth functions $$\hat{\beta}_{u1}(\cdot)\geq 1,~~\hat{\beta}_{1p}(\cdot)\geq 1,~~\hat{\beta}_{2p}(\cdot)\geq 1~~\textnormal{and}~~ \hat{\beta}_{3p}(\cdot)\geq 1.$$ 
Let $V_{\bar{x}_1}(\bar{x}_1)=\frac{1}{2}\bar{x}_1^2$ and the time derivative along the trajectories of \eqref{error-system-b} satisfies the following inequality
\begin{align*}
\dot{V}_{\bar{x}_1}\leq &\ \bar{x}_1 \zeta-\rho(\bar{x}_1)\bar{x}_1^2+\bar{x}_1\tilde{\chi}_1(\bar{\eta}_1, \bar{x}_{1})\nonumber\\
&- \|\bar{\eta}_2\|^2+\|P_2\|^2\|\tilde{p}(\bar{x}_1, \bar{\eta}_1, \zeta, \mu)\|^2\\
\leq &\  2\zeta^2-\rho(\bar{x}_1)\bar{x}_1^2+2\beta_{1\chi}(\bar{\eta}_1)\|\bar{\eta}_1\|^2+ \bar{x}_{1}^2(2\beta_{2\chi}(\bar{x}_{1})+4)\\
&- \|\bar{\eta}_2\|^2+\|P_2\|^2[\beta_{1 p}(\bar{\eta}_1)\|\bar{\eta}_1\|^2+ \bar{x}_{1}^2\beta_{2p}(\bar{x}_{1})+ \zeta^2\beta_{3p}(\zeta)]\\
\leq&-\big[\rho(\bar{x}_1)-2\beta_{2\chi}(\bar{x}_{1})-1\big]\bar{x}_1^2 +2\beta_{1\chi}(\bar{\eta}_1)\|\bar{\eta}_1\|^2+2\zeta^2.
\end{align*}
Let $V_1(\bar{\eta}_1, \bar{x}_{1}, \bar{\eta}_2)=V_0(\bar{x}_1)+U_{\bar{\eta}_1}(\bar{\eta}_1)+U_{\bar{\eta}_2}(\bar{\eta}_2)$. Then, for all $\mu \in \mathds{V}\times\mathds{W}\times\mathds{S}$, the time derivative along the trajectory of $(\bar{\eta}_1, \bar{x}_{1}, \bar{\eta}_2)$-subsystem in \eqref{error-system} results in
\begin{align*}
\dot{V}_{1} \leq &-\big[\rho(\bar{x}_1)-2\beta_{2\chi}(\bar{x}_{1})-1- \hat{\beta}_{2p}(\bar{x}_{1})-\hat{\beta}_{u1}(\bar{x}_1)\big]\bar{x}_1^2\\
&-\big[\bar{\gamma}_{\bar{\eta}_1} (\bar{\eta}_1 )-2\beta_{1\chi}(\bar{\eta}_1)- \hat{\beta}_{1 p}(\bar{\eta}_1)\big]\|\bar{\eta}_1\|^2\\
&- \gamma_{\bar{\eta}_2} (\bar{\eta}_2 )\|\bar{\eta}_2\|^2+\big[ \hat{\beta}_{3p}(\zeta)+2\big]\zeta^2.
\end{align*}
Finally, inequality \eqref{V1-derivative} can be achieved by letting $$\rho(\bar{x}_1)\geq 2\beta_{2\chi}(\bar{x}_{1})+1+ \hat{\beta}_{2p}(\bar{x}_{1})+\hat{\beta}_{u1}(\bar{x}_1)$$ and $$\bar{\gamma}_{\bar{\eta}_1} (\bar{\eta}_1 )\geq 2\beta_{1\chi}(\bar{\eta}_1)+\hat{\beta}_{1 p}(\bar{\eta}_1)+\gamma_{\bar{\eta}_1}(\bar{\eta}_1 ).$$
\end{proof}

\begin{lem}\label{lem-stab}%
Under Assumptions \ref{ass0}, \ref{ass0i} and \ref{ass5-explicit}, 
there exist a 
positive smooth function $k(\cdot)$ and a positive number $k^*$ such that the following nonadatpive control law  
\begin{align}\label{ESC-1b}
\bar{u} &= -k_0k(\zeta)\zeta
\end{align}
where $k_0\geq k^*$ and $\zeta= x_2-\chi_1(\eta_1)+ \rho(\bar{x}_1)\bar{x}_1$,
solves the robust stabilization problem for system \eqref{error-system}.
\end{lem}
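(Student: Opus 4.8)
The plan is to close the loop on the $\zeta$-dynamics \eqref{error-system-d} using the bound from Lemma~\ref{lemmabodev}, so that the total Lyapunov function becomes strict. First I would invoke Lemma~\ref{lemmabodev} with suitable choices of the free functions $\rho_1(\cdot)$, $\gamma_{\bar\eta_1}(\cdot)$, $\gamma_{\bar\eta_2}(\cdot)$ (all $\geq 1$; the precise choices will be pinned down at the end to absorb cross-terms), obtaining the ISS-type Lyapunov function $V_1(\bar\eta_1,\bar x_1,\bar\eta_2)$ together with \eqref{V1-derivative}. Then I would take the composite candidate $V=V_1+\tfrac12\zeta^2$ and differentiate along \eqref{error-system}. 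Since $b(v,w)$ is sufficiently smooth on the compact set $\mathds V\times\mathds W\times\mathds S$ and nonzero there (it is invertible by the regulator equations), there are positive constants $\underline b\le b(v,w)\le\bar b$ uniformly in $\mu$.

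The key step is to estimate $\dot\zeta$ along \eqref{error-system-d}. The terms $\tilde f(\bar x_1,\bar\eta_1,\zeta,\mu)$ and $b(v,w)\tilde\chi_2(\bar x_1,\bar\eta_1,\bar\eta_2,\zeta,\mu)$ are smooth and vanish at the origin, so by \cite[Lemma 11.1]{chen2015stabilization} there are positive smooth functions such that, for all $\mu\in\mathds V\times\mathds W\times\mathds S$,
\begin{align*}
\zeta\tilde f(\bar x_1,\bar\eta_1,\zeta,\mu) &\le \zeta^2\gamma_{f}(\zeta)+\bar x_1^2\,\sigma_{1f}(\bar x_1)+\|\bar\eta_1\|^2\sigma_{2f}(\bar\eta_1),\\
\zeta\, b(v,w)\tilde\chi_2(\cdots) &\le \zeta^2\gamma_{\chi}(\zeta)+\bar x_1^2\sigma_{1\chi}(\bar x_1)+\|\bar\eta_1\|^2\sigma_{2\chi}(\bar\eta_1)+\|\bar\eta_2\|^2\sigma_{3\chi}(\bar\eta_2),
\end{align*}
using Young's inequality to split the mixed products. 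Substituting the control \eqref{ESC-1b} contributes $-k_0 k(\zeta)\,b(v,w)\,\zeta^2\le -k_0\underline b\,k(\zeta)\zeta^2$. Combining with \eqref{V1-derivative},
\begin{align*}
\dot V \le{}& -\big[\rho_1(\bar x_1)-\sigma_{1f}(\bar x_1)-\sigma_{1\chi}(\bar x_1)\big]\bar x_1^2
 -\big[\gamma_{\bar\eta_1}(\bar\eta_1)-\sigma_{2f}(\bar\eta_1)-\sigma_{2\chi}(\bar\eta_1)\big]\|\bar\eta_1\|^2\\
&-\big[\gamma_{\bar\eta_2}(\bar\eta_2)-\sigma_{3\chi}(\bar\eta_2)\big]\|\bar\eta_2\|^2
 -\big[k_0\underline b\,k(\zeta)-\hat\beta_{3p}(\zeta)-2-\gamma_f(\zeta)-\gamma_\chi(\zeta)\big]\zeta^2.
\end{align*}
Now choose in Lemma~\ref{lemmabodev} the functions $\rho_1(\bar x_1)\ge \sigma_{1f}(\bar x_1)+\sigma_{1\chi}(\bar x_1)+1$, $\gamma_{\bar\eta_1}(\bar\eta_1)\ge\sigma_{2f}(\bar\eta_1)+\sigma_{2\chi}(\bar\eta_1)+1$, $\gamma_{\bar\eta_2}(\bar\eta_2)\ge\sigma_{3\chi}(\bar\eta_2)+1$; and choose the gain function $k(\zeta)\ge \hat\beta_{3p}(\zeta)+\gamma_f(\zeta)+\gamma_\chi(\zeta)+3$ and $k^\ast=1/\underline b$, so that with $k_0\ge k^\ast$ every bracket is at least $1$. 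Hence $\dot V\le -\|(\bar\eta_1,\bar x_1,\bar\eta_2)\|^2-\zeta^2<0$ away from the origin, and since $V$ is radially unbounded in $(\bar\eta_1,\bar x_1,\bar\eta_2,\zeta)$ (using the $\mathcal K_\infty$ bounds on $V_1$ plus $\tfrac12\zeta^2$), the origin of \eqref{error-system} is globally asymptotically stable uniformly in $\mu$; that is, \eqref{ESC-1b} solves the robust stabilization problem.

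The main obstacle is the circular-looking coupling: the bounding functions $\sigma_{1f},\sigma_{2f},\dots$ arising from $\dot\zeta$ depend on $\rho(\bar x_1)$ (through $\tilde f$, $\tilde\chi_2$, $\tilde p$), while $\rho(\bar x_1)$ is itself produced by Lemma~\ref{lemmabodev} in terms of $\rho_1,\gamma_{\bar\eta_1}$. This must be untangled by the correct order of quantifiers: one first fixes $\rho(\cdot)$ via Lemma~\ref{lemmabodev} (for arbitrary but not-yet-chosen $\rho_1,\gamma_{\bar\eta_1},\gamma_{\bar\eta_2}$), which freezes $\tilde f,\tilde\chi_2$ and hence all the $\sigma_{\bullet}$; then one selects $\rho_1,\gamma_{\bar\eta_1},\gamma_{\bar\eta_2}$ large enough relative to those $\sigma_{\bullet}$ and applies Lemma~\ref{lemmabodev} once more (the lemma holds for \emph{any} such choice). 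A secondary technical point is the uniform positive lower bound on $b(v,w)$, which follows from smoothness and invertibility on the compact parameter set together with $v(t)$ remaining in a compact set generated by the neutrally stable exosystem under Assumption~\ref{ass0}. The rest is routine completion-of-squares bookkeeping, which I would not grind through here.
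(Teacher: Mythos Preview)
Your proposal is correct and follows essentially the same route as the paper's proof: set $V=V_1+c\,\zeta^2$, bound the vanishing nonlinearities $\tilde f$ and $b\tilde\chi_2$ in \eqref{error-system-d} via \cite[Lemma~11.1]{chen2015stabilization} together with Young's inequality, and then absorb the resulting cross-terms by choosing $\rho_1,\gamma_{\bar\eta_1},\gamma_{\bar\eta_2}$ in Lemma~\ref{lemmabodev} and the gain $k(\cdot),k_0$ large enough to obtain $\dot V\le -\|\bar\eta_1\|^2-\|\bar\eta_2\|^2-\bar x_1^2-\zeta^2$. The only cosmetic differences are that the paper takes $c=1$ (you take $c=\tfrac12$) and first bounds $\|\tilde f\|^2,\|b\tilde\chi_2\|^2$ rather than the products $\zeta\tilde f,\zeta\,b\tilde\chi_2$; the quantifier coupling you flag between $\rho$ and $(\rho_1,\gamma_{\bar\eta_i})$ is present but left implicit in the paper's proof as well, so this is a shared wrinkle rather than a gap in your plan relative to the paper.
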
%

\begin{proof} Regarding \eqref{error-system-d}, it can
be verified that the functions $\tilde{f}(\bar{x}_1, \bar{\eta}_1, \zeta, \mu)$ and \begin{align*}\tilde{\chi}_2(\bar{x}_{1}, \bar{\eta}_1, \bar{\eta}_{2}, \zeta, \mu)
&=-\chi_2(\bm{\eta}^{\star}_2)+\chi_2\Big(\bar{\eta}_{2}+\bm{\eta}^{\star}_2\\
& +b(\mu)^{-1} N_2 \big(\zeta-\rho(\bar{x}_1)\bar{x}_1+\tilde{\chi}_1(\bar{\eta}_1, \bar{x}_{1})\big)\Big) \end{align*} 
vanish at $\col(\bar{x}_{1}, \bar{\eta}_1, \bar{\eta}_2, \zeta)=\col(0, 0, 0, 0)$ for all $\mu \in \mathds{V}\times\mathds{W}\times\mathds{S}$. Hence, by \cite[Lemma 11.1]{chen2015stabilization}, there exist positive smooth functions $\beta_{1f}(\cdot)$, $\beta_{2f}(\cdot)$, $\beta_{3f}(\cdot)$, $\beta_{3\chi}(\cdot)$, $\beta_{4\chi}(\cdot)$,  $\beta_{5\chi}(\cdot)$ and $\beta_{6\chi}(\cdot)$ such that, for all $\mu \in \mathds{V}\times\mathds{W}\times\mathds{S}$,
$$\|\tilde{f}(\bar{x}_1, \bar{\eta}_1, \zeta, \mu)\|^2\leq  \beta_{1f}(\bar{x}_{1})\bar{x}_{1}^2+\beta_{2f}(\bar{\eta}_1)\|\bar{\eta}_1\|^2+\beta_{3f}( \zeta) \zeta^2$$ and
\begin{align*}\|b(v,w)\tilde{\chi}_2(\bar{x}_{1}, \bar{\eta}_1, \bar{\eta}_{2}, \zeta, \mu)\|^2\leq&\ \beta_{3\chi}(\bar{x}_1)\bar{x}_1^2+\beta_{4\chi}(\bar{\eta}_1)\|\bar{\eta}_1\|^2\\
&+\beta_{5\chi}(\bar{\eta}_{2})\|\bar{\eta}_{2}\|^2+\beta_{6\chi}(\zeta) \zeta^2.
\end{align*}
We pose a Lyapunov function candidate, $$V:=V(\bar{x}_1, \bar{\eta}_1, \bar{\eta}_2, \zeta),$$ for the closed-loop system \eqref{error-system} and \eqref{ESC-1b}, given by
\begin{align*}
V(\bar{\eta}_1, \bar{x}_1, \bar{\eta}_2, \zeta)=V_1(\bar{\eta}_1, \bar{x}_1, \bar{\eta}_2)+ \zeta^2.
\end{align*}
By Lemma \ref{lemmabodev}, for any $\mu \in \mathds{V}\times\mathds{W}\times\mathds{S}$, along the trajectory of \eqref{error-system} and \eqref{ESC-1b}, we have 
\begin{equation}
\begin{split}
    \dot{V}\leq & -\rho_1(\bar{x}_1)\bar{x}_1^2-\gamma_{\bar{\eta}_1}(\bar{\eta}_1 )\|\bar{\eta}_1\|^2-\gamma_{\bar{\eta}_2} (\bar{\eta}_2 )\|\bar{\eta}_2\|^2\\
&+\big[\hat{\beta}_{3p}(\zeta)+3+\|b(v,w)\|^2\big]\zeta^2+\|\tilde{f}(\bar{x}_1, \bar{\eta}_1, \zeta, \mu)\|^2\\
& +\|b(v,w)\tilde{\chi}_2(\bar{x}_{1}, \bar{\eta}_1, \bar{\eta}_{2}, \zeta, \mu)\|^2- 2b(v, w) k_0 k(\zeta) \zeta^2\\
\leq & -\big[\rho_1(\bar{x}_1)-\beta_{1f}(\bar{x}_{1})-\beta_{3\chi}(\bar{x}_1)\big]\bar{x}_1^2\\
&-\big[\gamma_{\bar{\eta}_1}(\bar{\eta}_1 )-\beta_{2f}(\bar{\eta}_1)-\beta_{4\chi}(\bar{\eta}_1)\big]\|\bar{\eta}_1\|^2\\
&- \big[\gamma_{\bar{\eta}_2} (\bar{\eta}_2 )-\beta_{5\chi}(\bar{\eta}_{2})\big]\|\bar{\eta}_2\|^2+\big[\hat{\beta}_{3p}(\zeta)+3\\
&+\|b(v,w)\|^2+\beta_{3f}( \zeta)+\beta_{6\chi}(\zeta)- 2b(v, w) k_0 k(\zeta) \big]\zeta^2.\nonumber
\end{split}
\end{equation}
It is noted that 
\begin{equation}\label{theoeq1}
\hat{\beta}_{3p}(\zeta)+3+\|b(v,w)\|^2+\beta_{3f}( \zeta)+\beta_{6\chi}(\zeta)\leq  \beta^* \beta_{\max}(\zeta)
\end{equation}
where $\beta^*$ is some positive constant and $\beta_{\max}(\zeta)$ is some known positive smooth function.
Let the smooth functions be chosen as follows
\begin{subequations}
\begin{align}
\rho_1(\bar{x}_1) \geq &\, \beta_{1f}(\bar{x}_{1})+\beta_{3\chi}(\bar{x}_1)+1 \label{Theofun1-1}\\
\gamma_{\bar{\eta}_1}(\bar{\eta}_1 )\geq &\, \beta_{2f}(\bar{\eta}_1)+\beta_{4\chi}(\bar{\eta}_1)+1 \label{Theofun1-2}\\
\gamma_{\bar{\eta}_2} (\bar{\eta}_2 )\geq &\, \beta_{5\chi}(\bar{\eta}_{2})+1 \label{Theofun1-3}\\
k(\zeta)\geq &\, \beta_{\max}(\zeta)+1 \label{Theofun1-4}\\
k_0\geq &\, k^*:=\max\{{\beta^*}/{2b^*},1\}\label{Theofun1-5}
\end{align}\end{subequations}
where $b^*$ is the lower bound of $b(v, w)$.\\
Hence,  we have 
\begin{align}\label{dotV}\dot{V}(\bar{\eta}_1, \bar{\eta}_2, \bar{x}_1, \zeta)\leq -\big\|\bar{\eta}_1\big\|^2-\big\|\bar{\eta}_2\big\|^2-\bar{x}_1^2-\zeta^2.
\end{align}
That is, for all $\mu \in \mathds{V}\times\mathds{W}\times\mathds{S}$, the equilibrium of the closed-loop system at the origin is
globally asymptotically stable. The proof is thus completed.
\end{proof}

Based on Lemma \ref{lem-stab}, we obtain the solution to Problem \ref{Prob: second-order-Output-regulation} summarized as follows.

\begin{thm}\label{Theorem-1}%
Under Assumptions \ref{ass0}, \ref{ass0i} and \ref{ass5-explicit}, the output regulation problem of the nonlinear uncertain system \eqref{second-nonlinear-systems} can be solved by the nonadaptive control law 
\begin{subequations}\begin{align}\label{}
\dot{\eta}_1&=M_1\eta_1+N_1x_2\\
\dot{\eta}_2&=M_2\eta_2+N_2u\\
u &= -k_0k(\zeta)\zeta + \chi_2(\eta_2)
\end{align}\end{subequations}
where $M_i$ and $N_i$ are defined in \eqref{MNINter}, $k(\cdot)$ is defined in \eqref{Theofun1-4}, $k_0$ is defined in \eqref{Theofun1-5}, and $\zeta= x_2-\chi_1(\eta_1)+ \rho(e)e$.
\end{thm}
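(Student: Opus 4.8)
The plan is to combine Lemma \ref{lemproconv1} with Lemma \ref{lem-stab} in the obvious way. First I would observe that Theorem \ref{Theorem-1} is essentially a restatement of Lemma \ref{lem-stab} in the original coordinates: the control law displayed in the theorem is exactly the internal model \eqref{explicit-mas} together with the feedback $u = \bar{u} + \chi_2(\eta_2)$, where $\bar{u} = -k_0 k(\zeta)\zeta$ is the virtual control of \eqref{ESC-1b}. So the first step is to verify that this composite control law has the structure \eqref{Procontr} required by Problem \ref{Prob: second-order-Output-regulation}, i.e. that $\vartheta = \col(\eta_1,\eta_2)$, that $\varphi$ and $\phi$ are sufficiently smooth and vanish at the origin (this uses smoothness of $\chi_1$, $\chi_2$, $\rho$, $k$ and the fact that $\zeta = x_2 - \chi_1(\eta_1) + \rho(e)e$ with $e = x_1 - h(v,w)$ reduces to $\bar x_1$ and the relevant error variables after the coordinate transformation).

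Next I would invoke Lemma \ref{lemproconv1}: the robust output regulation problem for \eqref{second-nonlinear-systems} is solved provided the feedback $\bar u = \psi(\zeta)$ solves the robust stabilization problem for the augmented error system \eqref{error-system}. Then I would apply Lemma \ref{lem-stab} directly: with $k(\cdot)$ chosen as in \eqref{Theofun1-4} and $k_0 \geq k^*$ as in \eqref{Theofun1-5}, the control law \eqref{ESC-1b} renders the origin of \eqref{error-system} globally asymptotically stable, as witnessed by the strict Lyapunov function $V = V_1 + \zeta^2$ and the estimate \eqref{dotV}. Hence the hypothesis of Lemma \ref{lemproconv1} is met, and the conclusion follows: for all $\sigma \in \mathds{S}$, $v(0) \in \mathds{V}$, $w \in \mathds{W}$ and all initial states, the closed-loop trajectories are bounded and $\lim_{t\to\infty} e(t) = 0$.

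The only genuine subtlety — and the step I expect to need the most care — is the bookkeeping connecting the error coordinates back to the measurable signals. In Lemma \ref{lem-stab} the feedback depends on $\zeta = \tilde x_2 + \rho(\bar x_1)\bar x_1$ with $\tilde x_2 = x_2 - \chi_1(\eta_1)$ and $\bar x_1 = x_1 - \mathbf{x}_1 = x_1 - h(v,w)$, which is precisely $e$; so $\zeta = x_2 - \chi_1(\eta_1) + \rho(e)e$ is implementable from $(x_1,x_2,\eta_1,e)$ without knowing $\mathbf{x}_1(v,w,\sigma)$ or $\mathbf{x}_2(v,w,\sigma)$. I would also note that $\rho(\cdot)$ is the function produced in Lemma \ref{lemmabodev} and is fixed before the choice of $k_0$, so there is no circular dependence among the design parameters: one first fixes $\rho$, $\gamma_{\bar\eta_1}$, $\gamma_{\bar\eta_2}$, $\rho_1$ (Lemma \ref{lemmabodev}), then chooses $k$, $k_0$ (Lemma \ref{lem-stab}). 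Everything else is routine, so the proof is short: it is little more than ``apply Lemma \ref{lemproconv1} and Lemma \ref{lem-stab}.''
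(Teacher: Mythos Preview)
Your proposal is correct and matches the paper's own treatment: the authors state that Theorem \ref{Theorem-1} follows directly by combining Lemma \ref{lem-stab} with the conversion lemma (Lemma \ref{lemproconv1}), and omit the details. Your additional checks on implementability of $\zeta$ and the order of design choices are sound and only make the argument more explicit than the paper itself.
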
%
{\myr By using Lemma \ref{lem-stab}, Theorem \ref{Theorem-1} can be directly obtained from Lemma 3. Therefore, to avoid redundancy, the detailed proof is omitted from the manuscript. We have clarified this point for better understanding.}
\begin{Corollary}\label{Theorem-2}%
Under Assumptions \ref{ass0}, \ref{ass0i} and \ref{ass5-explicit}, the output regulation problem of the nonlinear uncertain system \eqref{second-nonlinear-systems} can be solved by the control law 
\begin{subequations}\label{ctr2}
\begin{align}
\dot{\eta}_1&=M_1\eta_1+N_1x_2\\
\dot{\eta}_2&=M_2\eta_2+N_2u\\
\dot{\hat{k}}&=k(\zeta)\zeta^2\\
u &= -\hat{k} k(\zeta)\zeta + \chi_2(\eta_2)\label{adESC-1b}
\end{align}
\end{subequations}
where $M_i$ and $N_i$ are defined in \eqref{MNINter} with $i=1, 2$, $k(\cdot)$ is defined in \eqref{Theofun1-4}, $k_0$ is defined in \eqref{Theofun1-5}, and $\zeta= x_2-\chi_1(\eta_1)+ \rho(e)e$.
\end{Corollary}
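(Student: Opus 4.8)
The plan is to reduce Corollary~\ref{Theorem-2} to Lemma~\ref{lem-stab} and Lemma~\ref{lemproconv1} by showing that the adaptive gain $\hat{k}$ in \eqref{ctr2} behaves, asymptotically, like an admissible fixed gain $k_0\geq k^*$. First I would invoke Lemma~\ref{lemproconv1}: it suffices to prove that the control law \eqref{adESC-1b}, written in error coordinates as $\bar{u}=-\hat{k}k(\zeta)\zeta$, solves the robust stabilization problem of the augmented system \eqref{error-system} together with the scalar adaptation $\dot{\hat{k}}=k(\zeta)\zeta^2$. To that end I would reuse the Lyapunov function from Lemma~\ref{lem-stab}, augmented by the standard adaptive term: set
\begin{align*}
W(\bar{\eta}_1,\bar{x}_1,\bar{\eta}_2,\zeta,\hat{k})=V_1(\bar{\eta}_1,\bar{x}_1,\bar{\eta}_2)+\zeta^2+\tfrac{1}{b^*}(\hat{k}-k_0)^2,
\end{align*}
where $k_0\geq k^*$ is the constant from Lemma~\ref{lem-stab} and $b^*$ the lower bound of $b(v,w)$.

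Next I would differentiate $W$ along the closed loop. The $V_1+\zeta^2$ part produces exactly the bound computed in the proof of Lemma~\ref{lem-stab}, except that the damping term is now $-2b(v,w)\hat{k}k(\zeta)\zeta^2$ instead of $-2b(v,w)k_0k(\zeta)\zeta^2$; the difference is $2b(v,w)(k_0-\hat{k})k(\zeta)\zeta^2$. The adaptation term contributes $\frac{2}{b^*}(\hat{k}-k_0)\dot{\hat{k}}=\frac{2}{b^*}(\hat{k}-k_0)k(\zeta)\zeta^2$, which after using $b(v,w)\geq b^*>0$ dominates and cancels the cross term $2b(v,w)(k_0-\hat{k})k(\zeta)\zeta^2$ (one checks the sign: the cross term is $\leq 2b^*(k_0-\hat k)k(\zeta)\zeta^2$ when $\hat k\le k_0$ and the cancellation goes through with the $1/b^*$ weight). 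Choosing $\rho_1$, $\gamma_{\bar{\eta}_1}$, $\gamma_{\bar{\eta}_2}$ and $k(\cdot)$ exactly as in \eqref{Theofun1-1}--\eqref{Theofun1-4} then yields
\begin{align*}
\dot{W}\leq -\|\bar{\eta}_1\|^2-\|\bar{\eta}_2\|^2-\bar{x}_1^2-\zeta^2
\end{align*}
for all $\mu\in\mathds{V}\times\mathds{W}\times\mathds{S}$, so $W$ is non-increasing, all states including $\hat{k}$ are bounded, and by the LaSalle--Yoshizawa theorem (or Barbalat applied to $\int(\|\bar{\eta}_1\|^2+\|\bar{\eta}_2\|^2+\bar{x}_1^2+\zeta^2)\,dt<\infty$ together with boundedness of the derivatives) $\col(\bar{\eta}_1,\bar{x}_1,\bar{\eta}_2,\zeta)\to 0$. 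This gives robust stabilization of \eqref{error-system}, and Lemma~\ref{lemproconv1} then delivers $\lim_{t\to\infty}e(t)=0$ with all closed-loop signals bounded.

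The main obstacle is making the cancellation between the gain-mismatch cross term and the adaptation term fully rigorous: unlike the fixed-gain case, one cannot simply pick $k_0$ after seeing the disturbance bounds, since $\hat k$ is time-varying and a priori unbounded, so the argument must carry the term $2b(v,w)(k_0-\hat k)k(\zeta)\zeta^2$ with its sign intact and match it against $\frac{2}{b^*}(\hat k-k_0)k(\zeta)\zeta^2$ using only $b^*\le b(v,w)$ — this forces the specific $1/b^*$ weighting of the adaptive term and requires that $k_0\geq k^*$ be chosen (as in Lemma~\ref{lem-stab}) before fixing $W$. A secondary point is that asymptotic convergence of $\hat k$ itself is not claimed (and not needed); only boundedness of $\hat k$ and convergence of the regulated error are asserted, so invoking LaSalle--Yoshizawa rather than attempting a strict Lyapunov estimate in $\hat k$ is the clean route. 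Since all of this is a routine adaptive-backstepping supplement to Lemma~\ref{lem-stab}, the detailed computation can be omitted exactly as was done for Theorem~\ref{Theorem-1}.
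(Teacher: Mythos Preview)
Your overall strategy is exactly the paper's: augment $V=V_1+\zeta^2$ by a quadratic penalty on the gain error and show that the adaptation law absorbs the gain mismatch, leaving $\dot W\le -\|\bar\eta_1\|^2-\|\bar\eta_2\|^2-\bar x_1^2-\zeta^2$, after which LaSalle--Yoshizawa finishes the job. However, the specific weight you choose, $\tfrac{1}{b^*}(\hat k-k_0)^2$, does not give the cancellation you claim. In the case $\hat k\le k_0$ you write that the cross term $2b(v,w)(k_0-\hat k)k(\zeta)\zeta^2$ is $\le 2b^*(k_0-\hat k)k(\zeta)\zeta^2$; but since $b(v,w)\ge b^*$ and $(k_0-\hat k)\ge 0$ here, the inequality goes the other way. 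And even if that bound held, summing with $\tfrac{2}{b^*}(\hat k-k_0)k(\zeta)\zeta^2$ leaves the residual $2(k_0-\hat k)\bigl[b^*-\tfrac{1}{b^*}\bigr]k(\zeta)\zeta^2$, which has no definite sign. So the step you flag as ``the main obstacle'' indeed fails as written.

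The paper takes the weight $b^*(\hat k-\bar k)^2$ (not $1/b^*$) and uses a different splitting: first bound $-2b(v,w)\hat k\,k(\zeta)\zeta^2\le -2b^*\hat k\,k(\zeta)\zeta^2$, which only needs $\hat k\ge 0$ (ensured by $\dot{\hat k}=k(\zeta)\zeta^2\ge 0$ and nonnegative initialization), and then observe the exact algebraic cancellation $-2b^*\hat k+2b^*(\hat k-\bar k)=-2b^*\bar k$. Choosing $\bar k\ge\max\{\beta^*/(2b^*),1\}$ then reproduces precisely the fixed-gain damping $-2b^*\bar k\,k(\zeta)\zeta^2$, and the remainder of the estimate from Lemma~\ref{lem-stab} carries over verbatim. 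With this correction your argument, including the LaSalle--Yoshizawa conclusion (since $\dot W$ is only semi-definite in $\hat k$), is sound and matches the paper's proof.
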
%
\begin{proof}
Define the following Lyapunov function  
$$W(\bar{\eta}_1, \bar{\eta}_2, \bar{x}_1, \zeta)=V(\bar{\eta}_1, \bar{\eta}_2, \bar{x}_1, \zeta)+b^*(\hat{k}-\bar{k})^2,$$ where $\bar{k}$ is some positive number to be determined. Then, for any $\mu \in \mathds{V}\times\mathds{W}\times\mathds{S}$, along the trajectory of \eqref{error-system} and \eqref{ctr2}, the time derivative of $W(\bar{\eta}_1, \bar{\eta}_2, \bar{x}_1, \zeta)$ admits
\begin{equation}
\begin{split}
    \dot{W}\leq & -\rho_1(\bar{x}_1)\bar{x}_1^2-\gamma_{\bar{\eta}_1}(\bar{\eta}_1 )\|\bar{\eta}_1\|^2-\gamma_{\bar{\eta}_2} (\bar{\eta}_2 )\|\bar{\eta}_2\|^2\\
&+\big[\hat{\beta}_{3p}(\zeta)+3+\|b(v,w)\|^2\big]\zeta^2+\|\tilde{f}(\bar{x}_1, \bar{\eta}_1, \zeta, \mu)\|^2\\
& +\|b(v,w)\tilde{\chi}_2(\bar{x}_{1}, \bar{\eta}_1, \bar{\eta}_{2}, \zeta, \mu)\|^2- 2b(v, w) \hat{k} k(\zeta) \zeta^2\\&+2b^*(\hat{k}-\bar{k})\dot{\hat{k}}\\
\leq & -\rho_1(\bar{x}_1)\bar{x}_1^2-\gamma_{\bar{\eta}_1}(\bar{\eta}_1 )\|\bar{\eta}_1\|^2-\gamma_{\bar{\eta}_2} (\bar{\eta}_2 )\|\bar{\eta}_2\|^2\\
&+\big[\hat{\beta}_{3p}(\zeta)+3+\|b(v,w)\|^2\big]\zeta^2+\|\tilde{f}(\bar{x}_1, \bar{\eta}_1, \zeta, \mu)\|^2\\
& +\|b(v,w)\tilde{\chi}_2(\bar{x}_{1}, \bar{\eta}_1, \bar{\eta}_{2}, \zeta, \mu)\|^2- 2b^* \hat{k} k(\zeta) \zeta^2\\&+2b^*(\hat{k}-\bar{k})k(\zeta)\zeta^2\\
\leq & -\big[\rho_1(\bar{x}_1)-\beta_{1f}(\bar{x}_{1})-\beta_{3\chi}(\bar{x}_1)\big]\bar{x}_1^2\\
&-\big[\gamma_{\bar{\eta}_1}(\bar{\eta}_1 )-\beta_{2f}(\bar{\eta}_1)-\beta_{4\chi}(\bar{\eta}_1)\big]\|\bar{\eta}_1\|^2\\
&- \big[\gamma_{\bar{\eta}_2} (\bar{\eta}_2 )-\beta_{5\chi}(\bar{\eta}_{2})\big]\|\bar{\eta}_2\|^2+\big[\hat{\beta}_{3p}(\zeta)+3\\
&+\|b(v,w)\|^2+\beta_{3f}( \zeta)+\beta_{6\chi}(\zeta)- 2b^* \bar{k} k(\zeta) \big]\zeta^2.\nonumber
\end{split}
\end{equation}
%
In view of \eqref{theoeq1},
letting $\rho_1(\cdot)$, $\gamma_{\bar{\eta}_1}(\cdot)$, $\gamma_{\bar{\eta}_2}(\cdot)$  and $k(\cdot)$ be the smooth functions defined in \eqref{Theofun1-1}-\eqref{Theofun1-4}, and $\bar{k}\geq \max\{{\beta^*}/({2b^*}),1\}$ gives 
\begin{align} \dot{W}(\bar{\eta}_1, \bar{\eta}_2, \bar{x}_1, \zeta)\leq -\big\|\bar{\eta}_1\big\|^2-\big\|\bar{\eta}_2\big\|^2-\bar{x}_1^2-\zeta^2
\end{align}
 for all $\mu \in \mathds{V}\times\mathds{W}\times\mathds{S}$. The proof is thus completed by invoking Lemma \ref{lem-stab}.
\end{proof}

\begin{rem}
 From Lemma 3 in \cite{wang2023nonparametric}, the existence of the nonlinear mappings $\chi_i$
 in \eqref{IM-01} relies on the solution of a time-varying equation 
$$\Theta(\eta_i)\check{a}_i (\eta_i)+\textnormal{\col}(\eta_{i,n_i +1},\cdots,\eta_{i,2n_i})=0,\;\;i=1,2.$$
It is noted that the inverse of 
$\Theta(\eta_i(t))$ may not be well-defined
for all $t\geq 0$.
Therefore, for the matrix $\Theta \in \mathds{R}^{n\times n}$,  we can define the following globally defined and smooth mapping 
$$\check{a}_{i}\left(\eta_i\right)=-O(\Theta(\eta_i))\textnormal{\col}(\eta_{i,n_i +1},\cdots,\eta_{i,2n_i})$$
where $O(\Theta):\mathbb{R}^{n\times n}\rightarrow  \mathds{R}^{n\times n}$ is a  function defined as follows:
\begin{equation}\label{eq-Theta}
 \begin{split}
      O(\Theta)
      =\frac{\textnormal{\mbox{det}}[\Theta]}{\textnormal{\mbox{det}}^2[\Theta]+\Psi(1+\textnormal{\mbox{det}}^2[\Theta]-\epsilon^2)}\textnormal{\mbox{Adj}}[\Theta]
 \end{split}
\end{equation}
\begin{equation*} 
  \begin{split}
  \Psi(\varsigma)=\frac{\kappa(1-s)}{\kappa(s)+\kappa(1-s)},\;   \kappa(s)=\left\{ {\begin{array}{*{20}{c}}
{e^{-1/s},}&{\mbox{if}\ s > 0}\\
0, &{\mbox{else}}
\end{array}} \right.
  \end{split}
\end{equation*}
 for a positive constant $\epsilon$.
 Then, from \cite{wang2023nonparametric},
 the smooth function $\chi_i(\eta_i)$ is given by:
 \begin{align*}\myr 
\chi_i(\eta_i)=\Gamma_i \Xi_i (\check{a}_{i} )\textnormal{\col}(\eta_{i,1},\cdots,\eta_{i,n_i}), \;\;i=1,2.
\end{align*}
\end{rem}
 
\begin{rem}
In comparison to the adaptive internal model approach, the nonadaptive approach is developed to solve the output regulation problem of nonlinear uncertain systems subject to unknown exosystems. 
The nonadaptive internal model brings some features. First, as shown in Lemma \ref{lemmabodev}, the inverse dynamics associated with the
internal model in the augmented system \eqref{error-system} is input-to-state stable, which is beneficial for the nonadaptive stabilization control design. 
Second, the proposed robust control approach does not need the additional adaptive dynamics for compensation of unknown parameters caused by the unknown exosystem \cite{su2013cooperative,liu2009parameter}. 
Third, a strict Lyapunov function is constructed, and the system stability is established in the sense of Lyapunov. In comparison, it is generally required to apply Barbalat's lemma for the stability analysis of the adaptive internal model approach. Thus, improved robustness with respect to unmodelled disturbances can be guaranteed for the proposed generic internal model approach in contrast to the adaptive internal model approach \cite{su2013cooperative}.
\end{rem}
\section{Application to the Duffing system control}\label{numerexam}
Consider the nonlinear system modelled by a controlled Duffing system as follows:
\begin{subequations}\label{ex:duffin}
    \begin{align}
    \dot{x}_1=&\ x_2  \\
    \dot{x}_2=& -c_3 x_2-c_1x_1-c_2x_1^3+u+d(t)\\
    y=&\ x_1
    \end{align}
\end{subequations}
where $\col(x_1,x_2)\in \mathds{R}^2$ is the state, 
$c_1$, $c_2$, and $c_3$ are unknown system parameters, and $c=\mbox{col}(c_1,c_2,c_3) \in \mathds{W}=\{c \in \mathds{R}^3: c_i \in [-2,2], i=1,2,3\}$.
The unknown external disturbance $d(t)$
is 
generated by the uncertain exosystem of the form:
\begin{subequations}\label{ex:exosystem} \begin{align}
 \dot{v}
 =&\,\left[\begin{matrix}0 &\sigma \\ -\sigma & 0\end{matrix}\right]v\\
 d =&\, \left[\begin{matrix}0 &1 \end{matrix}\right]v\\
 e =&\, y - v_1
 \end{align}\end{subequations}
where $\sigma\in \mathds{S}=\{\sigma\in \mathds{R}: \sigma \in [0.1,2]\}$ and $ \mathds{V}=\{v(0)\in \mathds{R}^2: v_i(0)  \in [-3,3] \}$.

The solution of the regulator equation associated with \eqref{ex:duffin} and \eqref{ex:exosystem}
is $\mathbf{x}_1(v,w,\sigma)= v_1$, $\mathbf{x}_2(v,w,\sigma)=\sigma v_2$, and
    $\mathbf{u}_2(v,w,\sigma)= -\sigma^2v_1+c_3 \sigma v_2+c_1 v_1+ c_2 v_1^3 - v_2 $.
Then, we have
 \begin{align*}\Phi_1 (a_1(\sigma))=&\,\left[\begin{matrix}0 & 1\\
-\sigma^2 & 0\end{matrix}\right],\; &\Gamma_1=\left[\begin{matrix}1 & 0\end{matrix}\right],\\
\Phi_2 (a_2(\sigma))=&\,\left[\begin{matrix}0_{3\times 1}& I_3\\
-9\sigma^2, & 0,-10\sigma^2,0\\
\end{matrix}\right],\; &\Gamma_2=\left[\begin{matrix}1 & 0_{1\times 3}\end{matrix}\right],
 \end{align*}
where $a_1(\sigma)=\col(\sigma^2,0)$ and $a_2(\sigma)=\col(9\sigma^4,0,10\sigma^2,0)$.
It can be verified that Assumptions \ref{ass0}--\ref{ass5-explicit} are all satisfied. Then, we let $\epsilon=0.1$, $m_1=\col(10, 18, 15, 6)$ and $m_2=\col(1, 5, 13, 22, 26, 22, 13, 5)$. 
 Then, we can obtain $\chi_1(\eta_1, \check{a}_{1})$,  $ \chi_2(\eta_2, \check{a}_{2})$ as in Table I, and choose $\rho(s)=10+4s^4$, $k_0=1$, $k(s)=s^2+1 $. 
 
The simulation is conducted with the unknown system parameters $c_1=-2$, $c_2=1.5$, $c_3=0.5$, $\sigma=0.5$, and the random chosen initial conditions with $x(0)=\col(1,-1)$, $v(0)=\col(1,1)$, $\eta_1(0)=0_4$, and $\eta_2(0)=0_8$.
The system trajectory and the tracking error are shown in Figs. \ref{figtraj} and \ref{figerror}, respectively. 
The tracking error is shown to vanish, and that, correspondingly, the state variables track the required steady-state trajectory. Moreover, the non-adaptive parameter estimation errors are shown to vanish as shown in Fig. \ref{fig5}. The simulation study thus demonstrates that the robust output regulation problem is effectively solved by the proposed approach for this uncertain system. 

   
\begin{table*}[ht]\myr 
\caption{Explicit Solution of the nonlinear mapping}\label{table_nonlinear}
\hrule \hrule\hrule\hrule   
{\begin{align*}
 \chi_1(\eta_1, \check{a}_{1})=&\,\eta_{1,1} (\check{a}_{1,1}^2 -{  m_{1,3}} \check{a}_{1,1} + { m_{1,1}}) + \eta_{1,2} ({ m_{1,2}} - \check{a}_{1,1} { m_{1,4}}),&\\
 \chi_2(\eta_2, \check{a}_{2})=&\,\eta_3({ m_{2,3} }+ \check{a}_{2,1}\check{a}_{2,3} - \check{a}_{2,3}{ m_{2,5}} + \check{a}_{2,3}(\check{a}_{2,1}- \check{a}_{2,3}^2 ) - { m_{2,7}}(\check{a}_{2,1}- \check{a}_{2,3}^2))+ \eta_2({ m_{2,2}} - \check{a}_{2,1}{ m_{2,6}} + \check{a}_{2,1}\check{a}_{2,3}{ m_{2,8}})\\
 & + \eta_1(\check{a}_{2,1}^2 - \check{a}_{2,1}\check{a}_{2,3}^2 + { m_{2,7}}\check{a}_{2,1}\check{a}_{2,3} - {m_{2,5}}\check{a}_{2,1} + {m_{2,1}})- \eta_4(\check{a}_{2,3}{m_{2,6}} -{ m_{2,4}} + { m_{2,8}}(\check{a}_{2,1}- \check{a}_{2,3}^2)),&
  \end{align*}}
\hrule \hrule 
{  \begin{align*}
 \check{a}_{1,1}=&\,\frac{\textnormal{\mbox{det}}[\Theta(\eta_1)]}{\textnormal{\mbox{det}}^2[\Theta(\eta_1)]+\Psi(1+\textnormal{\mbox{det}}^2[\Theta(\eta_1)]-\epsilon^2)}\left(\eta_{1,2}\eta_{1,4}-  \eta_{1,3}^2\right),&\check{a}_{1,2}=0,\\
 \check{a}_{2,1}=&\,\frac{\textnormal{\mbox{det}}[\Theta(\eta_2)]}{\textnormal{\mbox{det}}^2[\Theta(\eta_2)]+\Psi(1+\textnormal{\mbox{det}}^2[\Theta(\eta_2)]-\epsilon^2)} \big[\eta_{2,5}(\eta_{2,7}\eta_{2,4}^2 - 2 \eta_{2,4} \eta_{2,5} \eta_{2,6} + \eta_{2,5}^3 - \eta_{2,3} \eta_{2,7} \eta_{2,5} + \eta_{2,3} \eta_{2,6}^2) \\
 &- \eta_{2,8} (\eta_{2,6} \eta_{2,3}^2 - 2 \eta_{2,3} \eta_{2,4} \eta_{2,5} + \eta_{2,4}^3 - \eta_{2,2} \eta_{2,6} \eta_{2,4} + \eta_{2,2} \eta_{2,5}^2) \\
 &- \eta_{2,7} (- \eta_{2,7} \eta_{2,3}^2 + \eta_{2,6} \eta_{2,3} \eta_{2,4} + \eta_{2,3} \eta_{2,5}^2 - \eta_{2,4}^2 \eta_{2,5} + \eta_{2,2} \eta_{2,7} \eta_{2,4} - \eta_{2,2} \eta_{2,6} \eta_{2,5}) \\
 &- \eta_{2,6} (- \eta_{2,4}^2 \eta_{2,6} + \eta_{2,4} \eta_{2,5}^2 + \eta_{2,3} \eta_{2,7} \eta_{2,4} - \eta_{2,3} \eta_{2,5} \eta_{2,6} - \eta_{2,2} \eta_{2,7} \eta_{2,5} + \eta_{2,2} \eta_{2,6}^2)\big], &\check{a}_{2,2}=0,\\
 \check{a}_{2,3}=&\,\frac{\textnormal{\mbox{det}}[\Theta(\eta_2)]}{\textnormal{\mbox{det}}^2[\Theta(\eta_2)]+\Psi(1+\textnormal{\mbox{det}}^2[\Theta(\eta_2)]-\epsilon^2)} \big[\eta_{2,8} (- \eta_{2,6} \eta_{2,2}^2 + \eta_{2,5} \eta_{2,2} \eta_{2,3} + \eta_{2,2} \eta_{2,4}^2 - \eta_{2,3}^2 \eta_{2,4} + \eta_{2,1} \eta_{2,6} \eta_{2,3} - \eta_{2,1} \eta_{2,5} \eta_{2,4}) \\
&- \eta_{2,5} (- \eta_{2,7} \eta_{2,3}^2 + \eta_{2,6} \eta_{2,3} \eta_{2,4} + \eta_{2,3} \eta_{2,5}^2 - \eta_{2,4}^2 \eta_{2,5} + \eta_{2,2} \eta_{2,7} \eta_{2,4} - \eta_{2,2} \eta_{2,6} \eta_{2,5}) \\
&- \eta_{2,6} (\eta_{2,4}^3 - \eta_{2,1} \eta_{2,4} \eta_{2,7} + \eta_{2,1} \eta_{2,5} \eta_{2,6} + \eta_{2,2} \eta_{2,3} \eta_{2,7} - \eta_{2,2} \eta_{2,4} \eta_{2,6} - \eta_{2,3} \eta_{2,4} \eta_{2,5}) \\
&+ \eta_{2,7} (\eta_{2,7} \eta_{2,2}^2 - 2 \eta_{2,2} \eta_{2,4} \eta_{2,5} + \eta_{2,3} \eta_{2,4}^2 + \eta_{2,1} \eta_{2,5}^2 - \eta_{2,1} \eta_{2,3} \eta_{2,7})\big],&\check{a}_{2,4}=0. 
 \end{align*} }
\hrule \hrule \hrule\hrule  
\end{table*}

\begin{figure}[htbp]
\centering
\epsfig{figure=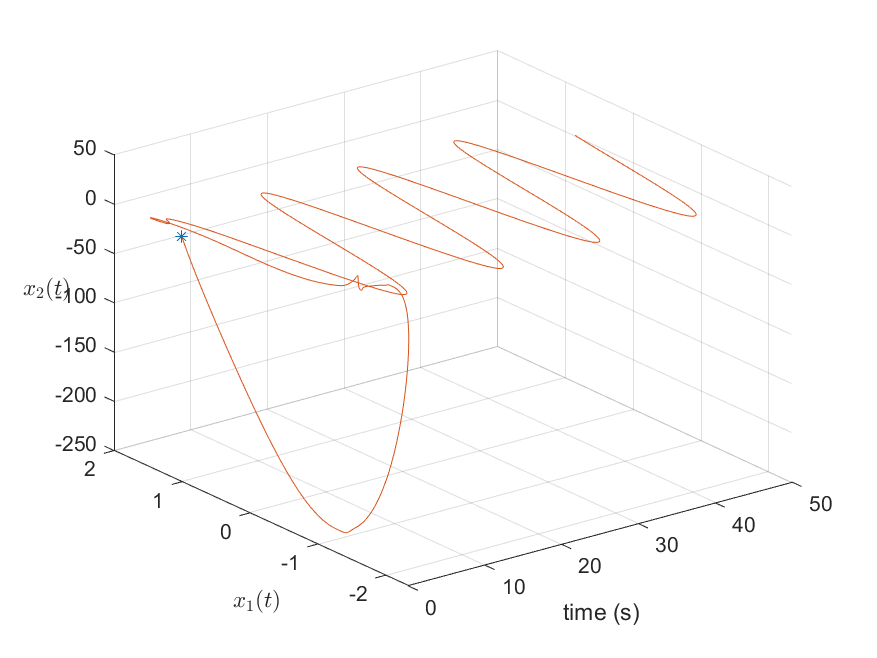,width=0.5\textwidth}
\caption{Trajectory of $(t,x_1,x_2)$ of the Duffing system}\label{figtraj}
\quad
\epsfig{figure=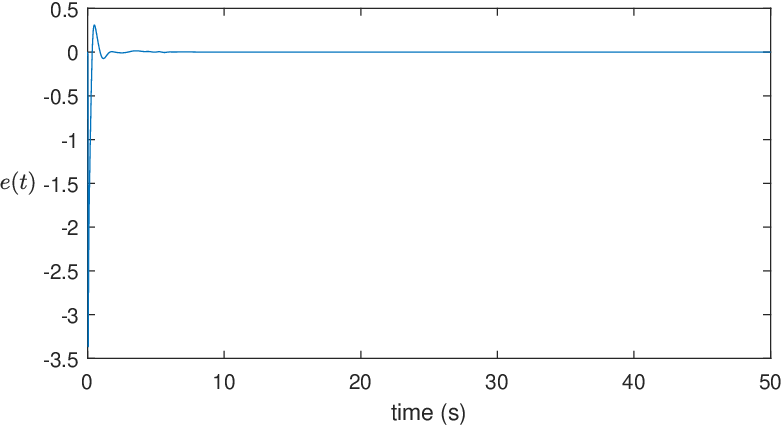,width=0.5\textwidth}
\caption{\myr Tracking error of the Duffing system}\label{figerror}
\quad
 \epsfig{figure=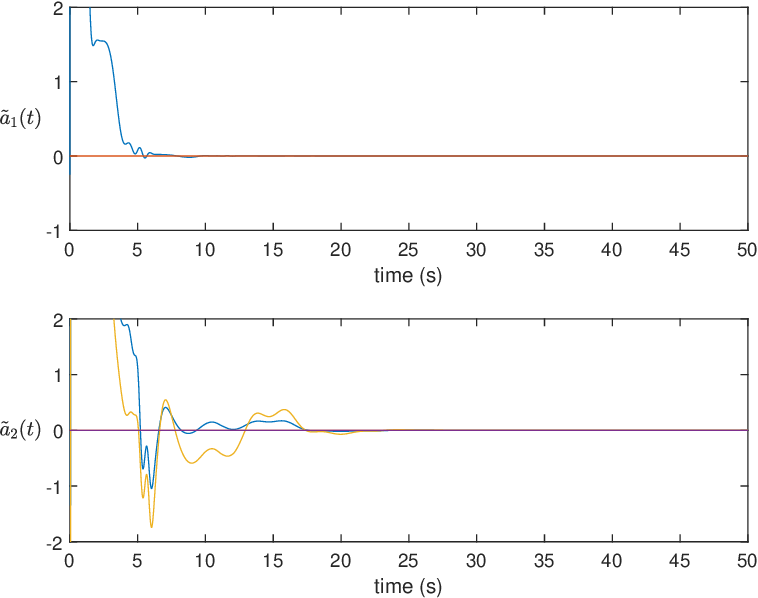,width=0.5\textwidth}
 \caption{\myr Parameter estimation error of non-adaptive method}\label{fig5}
 \end{figure}

\section{Conclusion}\label{conlu}
 
The robust output regulation problem has been addressed for a class of second-order nonlinear uncertain systems subject to an unknown exosystem. By analyzing the steady state of the system,  generic internal models are constructed that achieve exact steady-state compensation. A coordinate transformation is introduced that converts the robust output regulation problem to a stabilization problem of an augmented system composed of the second-order nonlinear uncertain system and the generic internal models. For the proposed robust stabilization control law, a strict Lyapunov function is constructed. With the help of the changing supply rate technique, it is shown that the output zeroing manifold of the augmented system can be made attractive by the proposed nonadaptive control law, which is shown to solve the robust output regulation problem. 
Finally, the effectiveness of the proposed nonadaptive internal model approach is demonstrated by its application to the control of a Duffing system.

\noindent
\footnotesize
\bibliographystyle{ieeetr}
\bibliography{myref}
\end{document}